\documentclass[11pt,letterpaper]{article}
\usepackage[margin=1in]{geometry}
\usepackage{amssymb,amsmath,amsthm}
\usepackage{graphicx}
\usepackage[colorlinks,citecolor=blue,linkcolor=blue,urlcolor=red,pagebackref]{hyperref}
\usepackage{subcaption}
\usepackage{booktabs}
\usepackage{lineno}
\usepackage{color}
\usepackage{enumitem}
\usepackage{tikz}
\usepackage[linesnumbered,lined,boxed]{algorithm2e}
\usepackage{framed}
\usepackage{thm-restate}
\setcounter{totalnumber}{1}
\usepackage[T1]{fontenc}

\newcommand*\patchAmsMathEnvironmentForLineno[1]{%
  \expandafter\let\csname old#1\expandafter\endcsname\csname #1\endcsname
  \expandafter\let\csname oldend#1\expandafter\endcsname\csname end#1\endcsname
  \renewenvironment{#1}%
     {\linenomath\csname old#1\endcsname}%
     {\csname oldend#1\endcsname\endlinenomath}}%
\newcommand*\patchBothAmsMathEnvironmentsForLineno[1]{%
  \patchAmsMathEnvironmentForLineno{#1}%
  \patchAmsMathEnvironmentForLineno{#1*}}%
\AtBeginDocument{%
\patchBothAmsMathEnvironmentsForLineno{equation}%
\patchBothAmsMathEnvironmentsForLineno{align}%
\patchBothAmsMathEnvironmentsForLineno{flalign}%
\patchBothAmsMathEnvironmentsForLineno{alignat}%
\patchBothAmsMathEnvironmentsForLineno{gather}%
\patchBothAmsMathEnvironmentsForLineno{multline}%
}

\newtheorem{theorem}{Theorem}
\newtheorem{definition}[theorem]{Definition}
\newtheorem{corollary}[theorem]{Corollary}
\newtheorem{claim}[theorem]{\bfseries{Claim}}
\newtheorem{lemma}[theorem]{\bfseries{Lemma}}

\newcommand{\eps}{\varepsilon}

\renewcommand{\leq}{\leqslant}
\renewcommand{\geq}{\geqslant}

\newbox\ProofSym
\setbox\ProofSym=\hbox{%
\unitlength=0.18ex%
\begin{picture}(10,10)
\put(0,0){\framebox(9,9){}}
\put(0,3){\framebox(6,6){}}
\end{picture}}

\begin{document}

\title{
  A Nearly Quadratic-Time FPTAS for Knapsack
}

\author{
  Lin Chen\thanks{chenlin198662@zju.edu.cn. Zhejiang University. Part of this work was done when the author was affiliated with Texas Tech University.}
  \and 
  Jiayi Lian\thanks{jiayilian@zju.edu.cn. Zhejiang University. Part of this work was done during an internship at Ant Group. Supported by Ant Group through CCF-Ant research fund [Project No. RF20220212]}
  \and
  Yuchen Mao\thanks{maoyc@zju.edu.cn. Zhejiang University. Supported by National Natural Science Foundation of China [Project No. 12271477]}
  \and
  Guochuan Zhang\thanks{zgc@zju.edu.cn. Zhejiang University. Supported by National Natural Science Foundation of China [Project No. 12131003]}
}

\date{}

\maketitle

\begin{abstract}
    We investigate the classic Knapsack problem and propose a fully polynomial-time approximation scheme (FPTAS) that runs in $\widetilde{O}(n + (1/\varepsilon)^2)$ time. This improves upon the $\widetilde{O}(n + (1/\varepsilon)^{11/5})$-time algorithm by Deng, Jin, and Mao [\textit{Proceedings of the 2023 Annual ACM-SIAM Symposium on Discrete Algorithms, 2023}]. Our algorithm is the best possible (up to a polylogarithmic factor) conditioned on the conjecture that $(\min, +)$-convolution has no truly subquadratic-time algorithm, since this conjecture implies that Knapsack has no $O((n + 1/\varepsilon)^{2-\delta})$-time FPTAS for any constant $\delta > 0$.
\end{abstract}

\section{Introduction}
We consider the classic Knapsack problem. Given a knapsack of capacity $t$ and a set $I$ of $n$ items, where each item $i \in I$ has weight $w_i$ and profit $p_i$, our goal is to find a subset $I' \subseteq I$ that maximizes $\sum_{i \in I'}p_i$ subject to $\sum_{i \in I'} w_i \leq t$.

Knapsack is a fundamental problem in combinatorial optimization and belongs to Karp's 21 NP-complete problems~\cite{Kar72}. Consequently, extensive effort has been devoted to developing approximation algorithms for Knapsack. A fully polynomial-time approximation scheme (FPTAS) is an algorithm that takes a precision parameter $\eps$ and produces a solution that is within a factor $1+\eps$ from the optimum in time $\mathrm{poly}(n,1/\eps)$.  Since the first FPTAS for Knapsack, there has been a long line of research on improving its running time, as summarized in Table~\ref{table:Knapsack}.

\begin{table}[!ht]
    \centering
    \caption{Polynomial-time approximation schemes for Knapsack. Symbol (\dag) means that it is a randomized approximation scheme.}
    \begin{tabular}{cc}
        \hline
        Knapsack & Reference \\
        \hline\specialrule{0em}{0pt}{2pt}
        $n^{O(1/\eps)}$ & Sahni~\cite{Sah75} \\
        ${O}(n\log n+(\frac{1}{\eps})^4\log \frac{1}{\eps})$ & Ibarra and Kim~\cite{IK75} \\
        ${O}(n\log n+(\frac{1}{\eps})^4)$ & Lawler~\cite{Law79} \\
        $O(n\log \frac{1}{\eps}+(\frac{1}{\eps})^3\log^2\frac{1}{\eps})$ & Kellerer and Pferschy~\cite{KP04} \\ 
        $O(n\log \frac{1}{\eps}+(\frac{1}{\eps})^{5/2}\log^3\frac{1}{\eps})$ \dag
        & Rhee~\cite{Rhe15} \\
        $O(n\log \frac{1}{\eps}+(\frac{1}{\eps})^{12/5}/2^{\Omega(\sqrt{\log (1/\eps)})})$ & Chan~\cite{Chan18} \\
        $O(n\log \frac{1}{\eps}+(\frac{1}{\eps})^{9/4}/2^{\Omega(\sqrt{\log (1/\eps)})})$ & Jin~\cite{Jin19} \\
        $\widetilde{O}(n+(\frac{1}{\eps})^{11/5}/2^{\Omega(\sqrt{\log (1/\eps)})})$ \dag& Deng, Jin and Mao~\cite{DJM23} \\
        $\widetilde{O}(n + (\frac{1}{\eps})^2)$ & This Paper \\
        \specialrule{0em}{0pt}{2pt}\hline
    \end{tabular}
    \label{table:Knapsack}
\end{table}

Prior to our work, the best-known FPTAS for Knapsack runs in $\widetilde{O}(n+(\frac{1}{\eps})^{11/5})$ time\footnote{In the paper we use an $\widetilde{O}(\cdot)$ notation to hide polylogarithmic factors.}~\cite{DJM23}. On the negative side, Knapsack has no $O((n + 1/\eps)^{2-\delta})$-time FPTAS for any constant $\delta > 0$, assuming that $(\min, +)$-convolution has no truly subquadratic-time algorithm~\cite{CMWW19, KPS17}. The following open problem has been repeatedly mentioned in a series of papers~\cite{Chan18,Jin19,Jin24}.

 \begin{quote}
     Does Knapsack have a FPTAS that runs in $O(n + (\frac{1}{\eps})^2)$ time?
 \end{quote}

Very recently, important progress has been made on pseudo-polynomial-time exact algorithms for Knapsack. Assuming that all inputs are integers, Jin~\cite{Jin24} and Bringmann~\cite{Bri24} independently show that Knapsack can be solved in $\widetilde{O}(n + w_{\max}^2)$ time, where $w_{\max}$ refers to the largest item weight. Both works are built upon a recent paper by Chen, Lian, Mao, and Zhang~\cite{CLMZ24aSODA}. The main techniques of these papers are based on proximity results for Knapsack. Therefore, it is a natural idea to apply proximity results to designing approximation schemes. There is, however, a critical challenge in doing so. Roughly speaking, the proximity technique works for only a given capacity $t$. Meanwhile, the current best framework for approximating Knapsack requires solving Knapsack for all capacities up to $t$. As a consequence, a direct combination of the existing techniques does not yield a satisfactory running time.

We resolve this technical challenge and manage to apply proximity results to approximating Knapsack. Our main result is the following.

\begin{theorem}\label{thm:main}
    There is an FPTAS for Knapsack that runs in $\widetilde{O}(n + \frac{1}{\eps^2})$ time.
\end{theorem}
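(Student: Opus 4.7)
The plan is to derive the main theorem as a direct consequence of Lemma~\ref{lem:key}. First I would invoke the lemma with a precision parameter $\eps' = \Theta(\eps / \mathrm{polylog}(1/\eps))$, chosen so that the polylogarithmic factor hidden inside the additive-error bound $\widetilde{O}(\eps)\cdot f_{\I}(t)$ is fully absorbed. Since $1/(\eps')^2 = \widetilde{O}(1/\eps^2)$, this runs in $\widetilde{O}(n + 1/\eps^2)$ time and returns a step function $f'$ satisfying $0 \leq f_{\I}(x) - f'(x) \leq \eps \cdot f_{\I}(t)$ for every $x \in [0,t]$.

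Evaluating $f'$ at $x = t$ gives $f'(t) \geq (1 - \eps)\cdot f_{\I}(t)$, so the ratio $f_{\I}(t)/f'(t)$ is at most $1/(1-\eps) \leq 1 + 2\eps$ whenever $\eps \leq 1/2$. A final constant-factor rescaling of $\eps$ then produces the $(1+\eps)$-approximation ratio required by the definition of an FPTAS. To recover an actual feasible item selection rather than only a value, I would augment the procedure behind Lemma~\ref{lem:key} with standard witness bookkeeping: each value of the approximating step function arises from a bounded number of elementary operations (max-plus convolution, merging, capping, truncation), so storing one backpointer per such operation and tracing back from the representative of $f'(t)$ reconstructs a concrete $z \in \{0,1\}^{\I}$ with $p(z) = f'(t)$ and $w(z) \leq t$, in $\widetilde{O}(n + 1/\eps^2)$ additional time.

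The only real obstacle is bookkeeping: one must check that the polylogarithmic factor hidden inside the additive-error guarantee of Lemma~\ref{lem:key} depends only polylogarithmically on $1/\eps$ (and not on other problem parameters), so that the inverse rescaling $\eps \mapsto \eps'$ costs only a polylogarithmic blow-up, and that the witness-tracking does not inflate the stated running time. Both points are routine given the framework outlined in the Technical Overview, so the theorem really is a one-line corollary of Lemma~\ref{lem:key}.
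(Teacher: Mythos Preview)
Your proposal is correct and takes essentially the same approach as the paper: invoke Lemma~\ref{lem:key}, rescale $\eps$ by a polylogarithmic factor so that the additive error becomes $\le \frac{\eps}{2} f_{\I}(t)$, and conclude the multiplicative $(1+\eps)$-guarantee at $x=t$. The paper's proof is even terser than yours and does not discuss witness recovery at all; your remarks about backpointer bookkeeping are a reasonable (and standard) addition, but are not part of the paper's argument.
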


We remark that very recently, independent of our work, Mao~\cite{Mao24} also obtained an $\widetilde{O}(n + \frac{1}{\eps^2})$-time FPTAS for Knapsack using a different technique.

\subsection{Technique Overview}
We first briefly describe the proximity technique and the framework we will use to approximate Knapsack.

\paragraph{Proximity Results for Knapsack} A proximity result states that there is an optimal solution close to the greedy solution that selects the items in decreasing order of efficiency (profit-to-weight ratio). A more intuitive interpretation of proximity results is that an optimal solution will select many high-efficiency items but only a few low-efficiency items. The first proximity result was implied by Eisenbrand and Weismantel's work for a general integer program~\cite{EW19}, and its proof was later simplified in~\cite{PRW21} in the context of Knapsack. 

Chen, Lian, Mao, Zhang~\cite{CLMZ24aSODA} developed a stronger proximity result via additive combinatorics tools. We briefly discuss their result. Assume that all item profits are integers and that the items are labelled in decreasing order of efficiency. That is, $\frac{p_1}{w_1} \geq \frac{p_2}{w_2} \geq \cdots \geq \frac{p_n}{w_n}$. Let $b$ be the smallest index such that $\sum_{i=1}^b w_i > t$. In Knapsack terminology, $b$ is called the break item for $t$. Let $p_{\max}$ be the maximum item profit. Let $\ell$ be largest index such that the items in $\{\ell+1,\ldots,b - 1\}$ have $\Omega(p_{\max}^{1/2})$ distinct profits, and let $r$ be the smallest index such that the items in $\{b,\ldots,r-1\}$ have $\Omega(p_{\max}^{1/2})$ distinct profits. Let $I_H = \{1, \ldots, \ell\}$ be the set of high-efficiency items, and let $I_L = \{r, \ldots, n\}$ be the set of low-efficiency items. It can be shown that in an optimal solution, the low-efficiency items in $I_L$ contribute a total profit of at most $O(p^{3/2}_{\max})$, while the high-efficiency items in $I_H$ contribute a total profit of at least $\sum_{i \in I_H}p_i - O(p^{3/2}_{\max})$.  Bringmann~\cite{Bri24} and Jin~\cite{Jin24} further improved the proximity result by considering the multiplicity of profits and carefully dividing items into polylogarithmic subsets. We will use a proximity result that is slightly more general than the one by Chen et al.~\cite{CLMZ24aSODA}, but much simpler than those by Bringmann~\cite{Bri24} and Jin~\cite{Jin24}.

\paragraph{A Functional Approach for Approximating Knapsack} We leverage Chan's framework~\cite{Chan18} for approximating Knapsack. Let $(I,t)$ be a Knapsack instance. This framework approximately solves the problem for all capacities up to $t$.  In other words, it considers a more general problem of approximating a function $f_I$, where $f_I(x)$ is the maximum total profit that can be achieved given the capacity is $x$. The advantage of this framework is that it allows us to divide the items into groups, tackle each group independently, and then merge the resulting function using $(\max, +)$-convolution. Using this framework together with the standard scaling technique, we can reduce the original problem to instances where $p_i$ are integers in $(\frac{1}{2\eps}, \frac{1}{\eps}]$ and $f_I(t) \leq \frac{1}{\alpha\eps}$ for some $\alpha \in [1, \frac{1}{\eps}]$.

We combine the above two techniques to obtain a more efficient approximation algorithm. Although natural, it is not trivial. As we mentioned before, the proximity result works for only a single capacity $t$: the definition of high- and low-efficiency items (that is, $I_H$ and $I_L$) depends on the break item for $t$. Meanwhile, the function approach requires us to solve the reduced problem for all capacities up to $t$. Clearly, we cannot bear the cost of applying the proximity result for each capacity $x \in [0,t]$. We address this technical challenge as follows.

We first identify a set $I_{\mathrm{tail}}$ of items that can always be considered as low-efficiency items for all capacities up to $t$. These items can be handled efficiently since their contribution to the optimal solution is small no matter what the capacity is.

It remains to deal with items in $I_{\mathrm{head}} = I \setminus I_{\mathrm{tail}}$. The items in $I_{\mathrm{head}}$ can be high-efficiency, median-efficiency, or low-efficiency depending on what the capacity is. We carefully divide the interval $[0, t]$ into $O(\frac{1}{\alpha\eps^{1/2}})$ sub-intervals so that for each sub-interval $j$, we can identify two subsets $I^j_H$ and $I^j_L$ of items that can be considered as high-efficiency and low-efficiency, respectively, for all capacities within this interval. Let $I^j_M$ be the rest of the items. Now we have $O(\frac{1}{\alpha\eps^{1/2}})$ subsets of $I_{\mathrm{head}}$. All these subsets can dealt with efficiently due to the following reasons.
\begin{itemize}
    \item When a set has a small contribution to the optimal solutions, we can approximate this set with less precision and therefore with less time cost. When a set has a large contribution to the optimal solutions, we consider the ``complementary`` problem of determining which items should not be selected. Since the items not selected by the optimal solutions have a small total profit, we can approximate with less precision and therefore with less time cost.

    \item There are only $O(\frac{1}{\alpha\eps})$ items in $I_{\mathrm{head}}$, and therefore, the standard dynamic programming for Knapsack can be efficient (given that the optimal value is also small). Note that dynamic programming solves the problem for all prefixes of the items set and all capacities up to $t$.
    In particular, our division of the interval $[0,t]$ guarantees that $I^1_L \supseteq I^2_L \supseteq\cdots \supseteq I^\ell_L$.  Therefore, a single run of dynamic programming for $(I^1_L, t)$ can solve the problem for all $I^j_L$ and all capacities up to $t$. All the subsets $I^j_H$ can be handled similarly.

    \item Our method can also guarantee that for each subset $I^j_M$,  the items in this subset have at most $\widetilde{O}(\frac{1}{\eps^{1/2}})$ distinct profits. There are known algorithms for such item sets whose items have only a small number of distinct profits.   
\end{itemize}

\subsection{Further Related Work} \paragraph{Running Time of Other Forms} In addition to the running time of the form $n+f(1/\eps)$, there is also a line of research focusing on FPTASes whose running time is of the form $n^{O(1)}\cdot f(1/\eps)$. Lawler~\cite{Law79} presented an FPTAS of running time $O(n^2/\eps)$. Kellerer and Pferschy~\cite{KP99} presented an alternative FPTAS of running time $\widetilde{O}({n}/{\eps^2})$. Chan~\cite{Chan18} gave two FPTASes of running time $\widetilde{O}({n^{3/2}}/{\eps})$ and $O((n/\eps^{4/3}+1/\eps^2)/2^{\Omega(\sqrt{\log (1/\eps)})})$, respectively. Jin~\cite{Jin19} gave an FPTAS of running time $O((n^{3/4}/\eps^{3/2}+1/\eps^2)/2^{\Omega(\sqrt{\log (1/\eps)}}+n\log (1/\eps))$. It remains open whether there exists an FPTAS of running time $\widetilde{O}(n/\eps)$.

\paragraph{Unbounded Knapsack} A problem closely related to 0-1 Knapsack is Unbounded Knapsack, where each item can be chosen arbitrarily many times.  There exists an FPTAS of running time $\widetilde{O}(n+\frac{1}{\eps^2})$ for the unbounded knapsack by Jansen and Kraft~\cite{JK18}, as well as by Bringmann and Cassis~\cite{BC22}. Like Knapsack, Unbounded Knapsack has no $O((n + 1/\eps)^{2-\delta})$-time FPTAS for any constant $\delta > 0$, assuming that $(\min, +)$-convolution has no truly subquadratic-time algorithm~\cite{CMWW19, KPS17}.

\paragraph{Additive Combinatorics} A fundamental result in additive combinatorics is that if a set $X\subset [n]$ contains $\Omega(\sqrt{n})$ distinct integers, then the set of all subset-sums of $X$ contains an arithmetic progression of length $\Omega(n)$ (see, e.g.~\cite{Fre93,Sar89,SV06,CFP21}). This has led to several significant progress in pseudo-polynomial time algorithms for the Subset-Sum problem~\cite{CFG89,GM91, Cha99,BW21}. Very recently, researchers further extended this method to obtain faster approximation and exact algorithms for Knapsack~\cite{PRW21,DJM23,CLMZ24aSODA,Bri24,Jin24}. 
These results rely on additive combinatorics methods to obtain a good proximity bound for Knapsack.  

\subsection{Paper Outline}
Section~\ref{sec:notation} defines some necessary terminology. Section~\ref{sec:functional} introduces the framework we will use for approximating Knapsack. In Section~\ref{sec:reduce}, we show how to reduce the original problem to instances with nice properties. Section~\ref{sec:proximity} gives a proximity result for Knapsack. In Section~\ref{sec:alg}, we present our algorithm for the reduced problem. We conclude this paper in Section~\ref{sec:conclusion}.

\section{Notation}\label{sec:notation}
Let $I$ be a set of items. We use $p(I)$ and $w(I)$ to denote the total profit and the total weight of the items in $I$, respectively. That is, $p(I) = \sum_{i \in I} p_i$ and $w(I) = \sum_{i \in I}w_i$. The efficiency of an item $i$ is defined to be $e_i = \frac{p_i}{w_i}$.

Let $f: \mathbb{R} \to \mathbb{R}$ be a function. Let $B$ be a real number. We define $\mathrm{leq}(f, B)$ to be the function whose value on $x$ is $f(x)$ if $f(x) \leq B$ and is $-\infty$ if $f(x) > B$. Similarly, we define $\mathrm{geq}(f, B)$ to be the function whose value on $x$ is $f(x)$ if $f(x) \geq B$ and is $-\infty$ if $f(x) < B$.  Basically, $\mathrm{leq}(f, B)$ represents the part of $f$ whose function values are at most $B$, and $\mathrm{geq}(f, B)$ represents the part of $f$ whose function values are at least $B$.

\section{Approximating for All Capacities}\label{sec:functional}
We adopt the ``functional'' approach used in~\cite{Chan18,Jin19}. Let $(I, t)$ be a Knapsack instance. Instead of (approximately) solving the problem for a single capacity $t$, we consider all capacities up to $t$. More precisely, we consider the problem of approximating the following function
\[
    f_{I}(x) = \max\left\{\textrm{$p(I')$ : $I' \subseteq I$ and $w(I') \leq x$}\right\} 
\]
for all $x \in [0, t]$. (The function values for $x > t$ do not matter. For simplicity, we assume that all functions in this article are defined on $[0, t]$.) It is easy to see that $f_I$ is a monotone (non-decreasing) step function. We say that a function $f'$ approximates $f$ with factor $1 + \eps$ if $1 \leq \frac{f(x)}{f'(x)} \leq 1 + \eps$ for all $x \in [0, t]$. We say that $f'$ approximates $f$ with additive error $\delta$ if $0\leq f(x) - f'(x) \leq \delta$ for all $x \in [0,t]$. We remark that additive error is what we will eventually bound, but sometimes we still use approximation with a factor because it is more standard and makes the analysis easier. It helps to note that when $f$ is a monotone function, an approximation with factor $1 + \eps$ implies an approximation with additive error $\eps f(t)$.

The advantage of the functional approach is that it allows us to divide the items into groups, tackle each group separately, and merge the resulting functions via $(\max, +)$-convolution.  Let $I_1 \cup I_2$ be a partition of $I$. It is easy to observe that 
\[
    f_I(x) = \max_{x' \in [0,x]}(f_{I_1}(x') + f_{I_2}(x - x'))
\]
for $x \in [0, t]$. In other words, $f_I = f_{I_1} \oplus f_{I_2}$, where $\oplus$ denotes the $(\max, +)$-convolution. We make the observation that the $(\max,+)$-convolution of approximations of two functions is an approximation of their $(\max,+)$-convolution. That is,
\begin{itemize}
    \item If $f'$ and $g'$ approximate $f$ and $g$ with factor $1 + \eps$, respectively, then $f' \oplus g'$ approximate $f \oplus g$ with factor $1 + \eps$.

    \item If $f'$ approximates $f$ with additive error $\delta_1$, and $g'$ approximates $g$ with additive error $\delta_2$, then $f' \oplus g'$ approximate $f \oplus g$ with additive error $\delta_1  + \delta_2$.
\end{itemize}

The $(\max, +)$-convolution of two monotone step functions can be computed efficiently when two functions have their function values being integers, and one of them is $p$-uniform and pseudo-concave. A step function is $p$-uniform if its function values are of the form $0, p, 2p, \ldots, \ell p$. A $p$-uniform step function is pseudo-concave if  $x_{i+1} - x_{i} \geq x_{i} - x_{i-1}$ for every integer $i \in [1, \ell-1]$, where $x_i$ is the $x$-point at which the function value changes to $ip$.  In particular, when all the items in $I$ have the same profit $p$, the function $f_I$ is $p$-uniform and pseudo-concave.

\begin{lemma}\label{lem:exact-two-conv}
    Let $f$ and $g$ be two monotone step functions whose function values are integers in $[0,B]$. If $g$ is pseudo-concave and $p$-uniform for some $p > 0$, then we can compute $f\oplus g$ in $O(B)$ time.
\end{lemma}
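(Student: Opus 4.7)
The plan is to reduce the $(\max,+)$-convolution to the SMAWK algorithm for row-maxima in a totally monotone matrix, exploiting the pseudo-concavity of $g$. First I would parametrize $g$ by its breakpoints $0 = y_0 < y_1 < \cdots < y_\ell$, so that $g(x) = kp$ on $[y_k, y_{k+1})$; pseudo-concavity becomes the statement that the gaps $y_{k+1}-y_k$ are non-decreasing in $k$. Because $g$ is piecewise constant, the convolution collapses to
\[
    (f \oplus g)(x) \;=\; \max_{0 \leq k \leq \ell}\bigl( f(x - y_k) + kp \bigr),
\]
with the convention $f(y) = -\infty$ for $y<0$.

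Next I would define the matrix $M[x,k] = f(x - y_k) + kp$ indexed by candidate rows $x$ and columns $k \in \{0,\ldots,\ell\}$, and verify that $M$ is totally monotone in the Monge sense: for $x < x'$ and $k < k'$, if column $k'$ beats column $k$ at row $x$, the same holds at row $x'$. The verification is a short exchange argument combining monotonicity of $f$ (increasing $x$ relaxes the argument $x - y_k$) with the non-decreasing gap structure of $g$ (the marginal ``cost'' of moving from $k$ to $k+1$ weakly shrinks as $x$ grows). This is precisely the hypothesis required by SMAWK.

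With total monotonicity in hand, I would invoke SMAWK to compute the argmax column for each row in time linear in the matrix dimensions. The candidate rows are the breakpoints of $f\oplus g$, each of the form $x_i^f + y_k$; by monotonicity of the argmax in $x$, they can be generated by a single left-to-right sweep that advances a pointer into the $y_k$ array monotonically, so the breakpoints of $f\oplus g$ are emitted with amortized $O(1)$ work each.

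The main obstacle will be to argue that both the output representation and the total computation fit inside $O(B)$: one must establish total monotonicity of $M$ with appropriate handling of the inadmissible rows (where $x < y_k$ and the entry must be treated as $-\infty$), and show that the sweep produces only $O(B)$ breakpoints rather than the naive $O(B \cdot \ell)$. Pseudo-concavity of $g$ is exactly the structural fact that carries both points through, since it forces the optimal $k^\ast(x)$ to be monotone in $x$ and enables the amortization; the unit-step integer range $[0,B]$ of $f$ then caps the number of distinct output values by $O(B)$, matching the advertised running time.
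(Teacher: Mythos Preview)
The paper does not supply its own proof of this lemma; it is quoted from \cite{polak2021knapsack,AT19} and used as a black box. The natural benchmark is therefore the argument in those references, which is indeed a reduction to SMAWK, so your overall plan is on the right track.

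However, there is a genuine gap: the matrix $M[x,k] = f(x - y_k) + kp$ that you set up is \emph{not} totally monotone, and your sketched verification fails. Take $g$ with two levels, $y_0 = 0$, $y_1 = 5$, $p = 50$ (trivially pseudo-concave), and let $f$ be the step function jumping from $0$ to $100$ at $10$. Then at $x=8$ column $k=1$ wins ($50$ vs.\ $0$), at $x=12$ column $k=0$ wins ($100$ vs.\ $50$), and at $x=15$ column $k=1$ wins again ($150$ vs.\ $100$); the optimal column is not monotone in the row, so SMAWK does not apply. The reason is structural: in a decomposition $M[\text{row},\text{col}] = A[\text{col}] + B[\text{row}-\text{col}]$, the Monge inequality is a condition on $B$, so the \emph{concave} function must be the shifted one. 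In your matrix the shifted term is $f(x-y_k)$, carrying the arbitrary $f$, and pseudo-concavity of $g$ never enters.

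The fix is to swap the roles. Write $(f\oplus g)(x) = \max_i \bigl(f_i + g(x - x_i)\bigr)$, where $x_0 < \cdots < x_m$ are the at most $B{+}1$ breakpoints of $f$ with values $f_0 < \cdots < f_m$, and take columns indexed by $i$; now the shifted term is $g(x-x_i)$, and pseudo-concavity of $g$ yields the inverse-Monge inequality directly. Equivalently---and this is how \cite{polak2021knapsack,AT19} phrase it---pass to the profit domain: with $X_f[v]$ the least weight achieving value $v$ and $X_g[kp]=y_k$, one has $X_{f\oplus g}[w]=\min_k\bigl(X_f[w-kp]+y_k\bigr)$; splitting $w$ by residue modulo $p$ turns this into a $(\min,+)$-convolution of an arbitrary array with the convex array $(y_k)_k$, to which SMAWK applies. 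The $O(B)$ bound then follows because the integer range of $f$ caps the number of columns (and in every use in the paper the values of $g$ are capped at $B$ as well).
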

\begin{proof}
    The proof is similar to that of~\cite[Fact 1]{Chan18}. For $y \in \{0, 1, \ldots, B\}$, define $f^{-1}(y)$ be the largest $x$ with $f(x) \leq y$. If no such $x$ exist, define $f^{-1}(y) = 0$. Define $g^{-1}(y)$ similarly. $f^{-1}$ and $g^{-1}$ are actually two sequences of length $B$.  We can compute their $(\min, +)$-convolution 
    \[
        (f \oplus g)^{-1}(y)= \min_{y'\in \{0,1, \ldots, y\}}\{f^{-1}(y') + g^{-1}(y - y')\}
    \]
    in $O(B)$ time, using an algorithm similar to that in~\cite[Lemma 10]{AT19}. From $(f \oplus g)^{-1}$, we can obtain $f \oplus g$ in $O(B)$ time.
\end{proof}

We say that a step function is of complexity $\ell$ if it has $\ell$ steps. There are also algorithms for approximating $(\max, +)$-convolution.

\begin{lemma}[{\cite[Lemma~2]{Chan18}}]\label{lem:approx-m-conv}
    Let $f_1, \ldots, f_m$ be monotone step functions with total complexity $\ell$ and ranges contained in $\{-\infty, 0\} \cup [A, B]$. Then we can compute a monotone step function that approximates $f_1 \oplus \cdots \oplus f_m$ with factor $1 + O(\eps)$ and complexity $\widetilde{O}(\frac{1}{\eps})$ in 
    \begin{enumerate}[label={\normalfont(\roman*)}]
        \item  $O(\ell)  + \widetilde{O}(\frac{1}{\eps^2}m)$ time in general;
        \item  $O(\ell) + \widetilde{O}(\frac{1}{\eps}m^2)$ time if, for $i \geq 2$, $f_i$ is $p_i$-uniform and pseudo-concave for some $p_i$.
    \end{enumerate}
\end{lemma}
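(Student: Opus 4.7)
The plan is to reduce the $m$-fold convolution to a sequence of pairwise convolutions, each handled by Lemma~\ref{lem:approx-two-conv}. The two parts call for different merge schedules, since in (i) we can exploit divide-and-conquer, while in (ii) we must preserve the special structure of the $f_i$'s in order to invoke the faster bound.

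For part (i), I would build a balanced binary tree of pairwise merges: place the input functions $f_1, \ldots, f_m$ at the leaves, and at each internal node apply Lemma~\ref{lem:approx-two-conv}(i) to its two children with error parameter $\eps' = \Theta(\eps/\log m)$. Because the output of each pairwise merge has complexity $\widetilde{O}(1/\eps')$, every non-leaf convolution costs $\widetilde{O}(1/(\eps')^2) = \widetilde{O}(1/\eps^2)$, while the leaf-level input-reading costs sum to $O(\ell)$. Summing over the $m-1$ internal nodes gives $O(\ell) + \widetilde{O}(m/\eps^2)$, and the error compounds as $(1+\eps')^{\log m} = 1+O(\eps)$, as required.

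For part (ii), the pseudo-concavity and $p$-uniformity needed by Lemma~\ref{lem:approx-two-conv}(ii) are destroyed by convolution, so I would instead merge sequentially: set $g_1 = f_1$ and $g_i = g_{i-1} \oplus f_i$, each time invoking Lemma~\ref{lem:approx-two-conv}(ii) with the still-structured operand $f_i$ playing the role of the pseudo-concave, $p$-uniform input. Each merge then costs $\widetilde{O}(1/\eps')$, but the chain has length $m$, so I must take $\eps' = \Theta(\eps/m)$ in order to bound the accumulated factor by $(1+\eps')^m = 1+O(\eps)$. The per-step cost becomes $\widetilde{O}(m/\eps)$, and summing over $m-1$ steps yields $O(\ell) + \widetilde{O}(m^2/\eps)$.

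The main obstacle is balancing the merge-schedule length against the per-merge cost of Lemma~\ref{lem:approx-two-conv}: the quadratic $\widetilde{O}(1/\eps^2)$ bound in part (i) of that lemma forces a logarithmic-depth schedule, whereas the linear $\widetilde{O}(1/\eps)$ bound in part (ii) can afford a depth-$m$ schedule only after shrinking $\eps'$ by a factor of $m$. A minor bookkeeping issue is to verify that each intermediate function's complexity stays at $\widetilde{O}(1/\eps')$, which follows directly from the output complexity guarantee of Lemma~\ref{lem:approx-two-conv}, so that input-reading is amortized to $O(\ell)$ rather than blowing up across levels; one should also note that the range expansion from $[A,B]$ to at most $[A,mB]$ contributes only an extra $\log m$ factor absorbed by the $\widetilde{O}$ notation.
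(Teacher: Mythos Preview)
Your proposal is correct and matches the standard argument. Note, however, that the present paper does not prove this lemma at all: it is quoted as a black box from \cite{Chan18}, so there is no ``paper's own proof'' to compare against. Your reconstruction---a balanced binary merge tree for part~(i) with $\eps' = \Theta(\eps/\log m)$, and a left-to-right sequential fold for part~(ii) with $\eps' = \Theta(\eps/m)$ so that each $f_i$ can serve as the structured operand in Lemma~\ref{lem:approx-two-conv}(ii)---is exactly the argument Chan gives, and your accounting of the error compounding, the intermediate complexities, and the range blow-up is accurate.
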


We remark that the running time in Lemma~\ref{lem:approx-m-conv} hides polylogarithmic factors in $\frac{B}{A}$. Throughout this article, we will always have $A \geq 1$ and $B \leq \frac{2}{\eps^2}$, so $\frac{B}{A}$ is bounded by $\frac{2}{\eps^2}$.

\section{Reducing the Problem}\label{sec:reduce}
We shall show that the functional approach in the previous section, together with some standard scaling techniques, makes it sufficient to consider the following reduced problem $\mathrm{RP}(\eps, \alpha)$.

\begin{quote}
    Let $\eps \in (0, 1]$ and $\alpha \in [1, \frac{1}{\eps}]$ be two numbers. Given a capacity $t$ and a set ${I}$ of $n$ items where every $p_i$ is an integer in $[\frac{1}{2\eps}, \frac{1}{\eps}]$ and $f_{I}(t) \leq \frac{1}{\alpha\eps^2}$, compute a function of complexity $\widetilde{O}(\frac{1}{\eps})$ that approximates $f_{I}$ with additive error $\widetilde{O}(\frac{1}{\alpha\eps})$.
\end{quote}

Let $(I, t)$ be a Knapsack instance. Our ultimate goal is to approximate a single value $f_I(t)$ with factor $1 + \eps$. Obviously, this can be reduced to approximating the function $f_I$ with additive error $\frac{\eps}{2}f_I(t)$.  We can relax the permissible additive error to $\widetilde{O}(\eps f_I(t))$, because an additive error of $\widetilde{O}(\eps f_I(t))$ can always be reduced to $\frac{\eps}{2}f_I(t)$ by adjusting $\eps$ by a polylogarithmic factor, which increases the running time by only a polylogarithmic factor. Now our goal becomes approximating $f_I$ with additive error $\widetilde{O}(\eps f_I(t))$.

\paragraph{Scaling the Profits} We scale the item profits to make $f_{I}(t) \in [\frac{1}{\eps^2}, \frac{2}{\eps^2}]$,  as follows. It is well-known that Knapsack has a simple linear-time 2-approximation algorithm~\cite[Section 2.5]{KPD04}. Using this algorithm, we can obtain a number $y$ such that $y \leq f_{I}(t) \leq 2y$ in $O(n)$ time. By scaling all profits $p_i$ by a factor of $y\eps^2$, we have that $f_{I}(t) \in [\frac{1}{\eps^2}, \frac{2}{\eps^2}]$, and the permissible additive error becomes $\widetilde{O}(\eps f_I(t)) = \widetilde{O}(\frac{1}{\eps})$.

\paragraph{Dealing with Tiny Items} We further simplify the instance to make every $p_i \in (\frac{1}{\eps}, \frac{2}{\eps^2}]$.  All the items with $p_i > \frac{2}{\eps^2}$ can be safely discarded as $f_{I}(t) \leq \frac{2}{\eps^2}$. We say that an item $i$ is tiny if $p_i \leq \frac{1}{\eps}$. We partition all the tiny items into groups $S_1, \ldots, S_\ell$ such that the following holds.
\begin{enumerate}[label = {(\roman*)}]
    \item For every integer $j \in [1, \ell-1]$, the efficiency of every item in $S_j$ is no smaller than that of any item in $S_{j+1}$. That is, $e_i \geq e_{i'}$ for any $i \in S_j$ and any $i' \in S_{j+1}$.

    \item $\frac{1}{\eps} < p(S_j) \leq \frac{2}{\eps}$ for every integer $j \in [1, \ell - 1]$ and $p(S_\ell) \leq \frac{2}{\eps}$.
\end{enumerate} 
This can be easily done by scanning all the tiny items in decreasing order of their efficiencies. If $p(S_\ell) \leq \frac{1}{\eps}$, we simply discard all the items in $S_\ell$, and this increases the additive error by at most $\frac{1}{\eps}$. For each remaining $S_j$, we replace it with a single meta item with profit $p(S_j)$ and weight $w(S_j)$. Note that $p(S_j) \in (\frac{1}{\eps}, \frac{2}{\eps}] \subseteq (\frac{1}{\eps}, \frac{2}{\eps^2}]$. We claim that replacing the tiny items with meta items incurs at most an additive error $\frac{2}{\eps}$. To see this, it suffices to show that for any set $S$ of tiny items, there exists a set $S'$ of meta items such that $p(S') > p(S) - \frac{2}{\eps}$ and $w(S') \leq w(S)$. Indeed, $S'$ can be obtained as follows: select the meta items greedily in decreasing order of efficiency and stop immediately when adding the next item to $S'$ makes its total weight greater than $w(S)$. Clearly, $w(S') \leq w(S)$. To bound $p(S')$, consider the set $S''$ that is obtained by greedily adding one more meta item to $S'$. Clearly, $w(S'') > w(S)$. Moreover, by the construction of meta items, $S''$ can also be viewed as a set of tiny items that are selected greedily in decreasing order of efficiency. Therefore, $p(S'') > p(S)$. Note that $S'$ and $S''$ differ by exact one meta item, so $p(S') \geq p(S'') - \frac{2}{\eps} > p(S) - \frac{2}{\eps}$.

\paragraph{Grouping the Items}   Since every $p_i \in (\frac{1}{\eps}, \frac{2}{\eps^2}]$, we can divide items into $1 + \log \frac{1}{\eps}$ groups such that items in the $j$-th group have $p_i \in (\frac{2^{j-1}}{\eps}, \frac{2^j}{\eps}]$. We can approximate each group separately, and merge the resulting functions via Lemma~\ref{lem:approx-m-conv}(i).  Suppose that for each group $I'$, we can compute a function of complexity $\widetilde{O}(\frac{1}{\eps})$ that approximates $f_{I'}$ with additive error $\widetilde{O}(\frac{1}{\eps})$. Since there are only $O(\log \frac{1}{\eps})$ groups, it takes only $\widetilde{O}(\frac{1}{\eps^2})$ time to merge these functions using Lemma~\ref{lem:approx-m-conv}(i), and the merged function approximates $f_I$ with additive error $\widetilde{O}(\frac{1}{\eps})$.

Now we can focus on a single group ${I}'$ of items. We have that $p_i \in (\frac{\alpha}{\eps}, \frac{2\alpha}{\eps}]$ for some $\alpha \in [1, \frac{1}{\eps}]$ and that $f_{{I}'}(t) \leq f_{I}(t) \leq \frac{2}{\eps^2}$. By dividing all profits $p_i$ by a factor of $2\alpha$, we have that $p_i \in (\frac{1}{2\eps}, \frac{1}{\eps}]$, and that $f_{{I}'}(t) \leq \frac{1}{\alpha\eps^2}$. The permissible additive error becomes $\widetilde{O}(\frac{1}{\alpha\eps})$. We assume that $\frac{1}{2\eps}$ is an integer, and this can be done by adjusting $\eps$ by a constant factor. We further assume that all profits $p_i$ are integers, and this changes the optimal value by at most a factor of $1 + 2\eps$, and therefore, incurs an additive error of at most $O(\eps f_I(t)) = O(\frac{1}{\eps})$.

We conclude this section by the following lemma.

\begin{lemma}\label{lem:reduction}
    An $\widetilde{O}(n + \frac{1}{\eps^2})$-time algorithm for the reduced problem $RP(\eps, a)$ implies an $\widetilde{O}(n + \frac{1}{\eps^2})$-time FPTAS for Knapsack.
\end{lemma}

\section{A Proximity Result}\label{sec:proximity}
Proximity Results for Knapsack capture the intuition that an optimal solution should select many high-efficiency items but only a few low-efficiency items. The proximity result we shall derive in this section is slightly more general than the one by Chen et al.~\cite{CLMZ24aSODA}, but much simpler than those by Bringmann~\cite{Bri24} and Jin~\cite{Jin24}. We provide a proof for completeness, although in essence, it is the same as that in~\cite{CLMZ24aSODA}.

Let $I$ be a set of items labelled in decreasing order of efficiency. That is, $e_1 \geq e_2 \geq \cdots \geq e_n$. For any capacity $t$, the break item for $t$ is the first item $b$ with $\sum_{i=1}^b w_i > t$. In case that $\sum_{i=1}^n w_i \leq t$, we set $b = n + 1$. Let $I' \subseteq I$ be a subset of items. Let $i'_1$ and $i'_2$ be the items in $I'$ with the smallest and largest index, respectively. We say that 
\begin{itemize}
    \item $I'$ is superior to $b$ by $\tau$ distinct profits if $i'_2 < b$ and the items in $\{i : i'_2 < i < b\}$ have at least $\tau$ distinct profits, and that

    \item $I'$ is inferior to $b$ by $\tau$ distinct profits if $i'_1 > b$ and the items in $\{i : b \leq i < i'_1\}$ have at least $\tau$ distinct profits.
\end{itemize}
See Figure~\ref{fig:sep} for an example.  

\begin{figure}[ht]
        \centering
        \begin{tikzpicture}
        \draw [thick] (0,0) -- (8,0);
        \node [align=center, left] at (-0.2,0) {$I$};

        \draw [thick] (0, 0.1) -- (0, -0.1);
        \node [align =center, below] at (0, -0.2) {$1$};

        \draw [thick] (4, 0.1) -- (4, -0.1);
        \node [align =center, below] at (4, -0.2) {$b$};

        \draw [thick] (8, 0.1) -- (8, -0.1);
        \node [align =center, below] at (8, -0.2) {$n$};

        \draw [thick] (6, 0.1) -- (6, -0.1);
        \node [align = center, rotate = -90, yscale = 5.5] at (5, 0.4) {$\{$};
        \node [align = center, above] at (5, 0.4) {\scriptsize {$\tau$ distinct}\\{\scriptsize profits}};
        
        \draw [thick] (2, 0.1) -- (2, -0.1);
        \node [align = center, rotate = -90, yscale = 5.5] at (3, 0.4) {$\{$};
        \node [align = center, above] at (3, 0.4) {\scriptsize {$\tau$ distinct}\\{\scriptsize profits}};

        \draw [thick] (1, 0.1) -- (1, -0.1);
        \node [below] at (1.5, -0.1) {$I_H$};

        \draw [thick] (7, 0.1) -- (7, -0.1);
        \node [below] at (6.5, -0.1) {$I_L$};

        \draw [thick, ->] (1, - 1) -- (7, - 1);
        \node [align = center, below] at (4,-1.2) {larger index implies lower efficiency};

        \end{tikzpicture}
        \caption{$I_H$ is superior to $b$ and $I_L$ is inferior to $b$ by $\tau$ distnct profits}
        \label{fig:sep}
\end{figure}

Basically, items superior to $b$ are high-efficiency items, and items inferior to $b$ are low-efficiency items. The following lemma gives the proximity result we need.
\begin{lemma}\label{lem:proximity}
    There exists a constant $c_p$ such that the following is true. Let $(I, t)$ be a Knapsack instance where items are labelled in decreasing order. Assume that the item profits are integers contained in $[1, p_{\max}]$. Let $b$ be the break item for $t$. There is an optimal solution $I^*$ such that for any $k \geq 1$ and any $I' \subseteq I$,
    \begin{enumerate}[label={\normalfont(\roman*)}]
        \item if $I'$ is superior to $b$ by $c_pkp^{1/2}_{\max}\log p_{\max}$ distinct profits, then
        \[
            p(I' \cap I^*) \geq p(I') - \frac{p_{\max}^{3/2}}{k};
        \]

        \item If $I'$ is inferior to $b$ by $c_pkp^{1/2}_{\max}\log p_{\max}$ distinct profits, then
        \[
            p(I' \cap I^*) \leq \frac{p_{\max}^{3/2}}{k}.
        \]
    \end{enumerate}
\end{lemma} 

The proof of Lemma~\ref{lem:proximity} relies on the following additive combinatoric result, which states that if we have two (multi-) sets of integers, one having lots of distinct values and the other having a large sum, then some non-empty subsets of these two (multi-) sets must have the same sum.

\begin{lemma}\label{lem:add-comb}
    There exists a constant $c$ such that the following holds. Let $X$ be a set of integers from $[1, p_{\max}]$. Let $Y$ be a multi-set of integers from $[1, p_{\max}]$. If for some $k \geq 1$, $|X| \geq ckp_{\max}^{1/2}\log p_{\max}$ and $\sum_{y \in Y} y \geq p_{\max}^{3/2}/k$, then there exists non-empty subset $X' \subseteq X$ and $Y' \subseteq Y$ such that $\sum_{x \in X'}x = \sum_{y \in Y'} y$. 
\end{lemma}

Assuming Lemma~\ref{lem:add-comb} holds, we prove Lemma~\ref{lem:proximity}.
\begin{proof}[Proof of Lemma~\ref{lem:proximity}]
    Let $c_p = 2c$ where $c$ is the constant in Lemma~\ref{lem:add-comb}. Each optimal solution can be viewed as a vector in $\{0,1\}^n$. Let $I^*$ be the optimal solution with the highest lexicographical order. That is, $I^*$ prefers items with smaller indices. We show that $I^*$ satisfies the state property.

    We first prove property (i). Consider arbitrary $I' \subseteq I$ that is superior to $b$ by $c_pkp^{1/2}_{\max}\log p_{\max}$ distinct profits. Let $i'$ be the item in $I'$ with the largest index. Define $I_H = \{i: 1 \leq i  \leq  i'\}$, $I_M = \{i: i' < i < b\}$ and $I_L = \{i: b \leq i \leq n\}$. Note that $I' \subseteq I_H$ and that the items in $I_M$ have $c_pkp^{1/2}_{\max}\log p_{\max}$ distinct profits. We shall show that $p(I_H \setminus I^*) \leq p^{3/2}_{\max}/k$, which will imply that
    \[
        p(I' \cap I^*) \geq p(I') - p(I' \setminus I^*) \geq  p(I') - p(I_H \setminus I^*) \geq p(I') - \frac{p_{\max}^{3/2}}{k}.
    \] 
    Suppose, for the sake of contradiction, that $p(I_H \setminus I^*) > p^{3/2}_{\max}/k$. Consider $I_M$. Recall that $c_p = 2c$. Either 
    \begin{enumerate}[label= {(\alph*)}]
        \item the items in $I_M \cap I^*$ have at least $ckp^{1/2}_{\max}\log p_{\max}$ distinct profits, or

        \item the items in $I_M \setminus I^*$ have at least $ckp^{1/2}_{\max}\log p_{\max}$ distinct profits.
    \end{enumerate}
    
    In case (a), the collection of profits of the items in $I_H \setminus I^*$ can be viewed as a multi-set $Y$ with $\Sigma_{y \in Y} y >  p^{3/2}_{\max}/k$, and the collection of distinct profits of the items in $I_M \cap I^*$ can be viewed as a set $X$ with $|X| \geq ckp^{1/2}_{\max}\log p_{\max}$. By Lemma~\ref{lem:add-comb}, there are non-empty subsets $X' \subseteq X$ and $Y' \subseteq Y$ such that $\sum_{x \in X'}x = \sum_{y \in Y'} y$. This implies that there are non-empty subsets $I^+_H \subseteq I_H \setminus I^*$ and $I_M^- \subseteq I_M \cap I^*$ with $p(I^+_H) = p(I_M^-)$. Also, $w(I^+_H) \leq w(I^-_M)$ because the efficiency of any items in $I_H$ is no smaller than that of an item in $I_M$. Therefore, by replacing $I^-_M$ with $I^+_H$, we can obtain another optimal solution $I^{**} = (I^* \setminus I^-_M) \cup (I^+_H)$. Note that $I^{**}$ has a higher lexicographical order than $I^*$. But $I^*$ is the optimal solution with the highest lexicographical order, which makes a contradiction.

    Consider case (b). By the definition of break item, we have 
    \(
        p(I^*) \geq p(I_H) + p(I_M).
    \)
    Therefore, $p(I_H \setminus I^*) > p^{3/2}_{\max}/k$ implies that $p(I_L \cap I^*) > p^{3/2}_{\max}/k$. By an argument similar to that in case (a), there are non-empty subsets $I_M^+ \subseteq I_M \setminus I^*$ and $I_L^- \subseteq I_L \cap I^*$ such that $p(I_M^+) = p(I_L^-)$. By replacing $I_L^-$ with $I_M^+$, we can obtain an optimal solution $I^{**} = (I^* \setminus I^-_L) \cup (I^+_M)$ that has a high lexicographical order than $I^*$. Contradiction.

    Property (ii) can be proved similarly. Consider arbitrary $I' \subseteq I$ that is inferior to $b$ by $c_pkp^{1/2}_{\max}\log p_{\max}$ distinct profits. Let $i'$ be the item in $I'$ with the smallest index. Define $I_H = \{i: 1 \leq i < b\}$, $I_M = \{i : b \leq i < i'\}$ and $I_L = \{i : i'\leq i\leq n\}$. Note that $I' \subseteq I_L$ and that the items in $I_M$ have $c_pkp^{1/2}_{\max}\log p_{\max}$ distinct profits. To prove property (ii), it suffices to show that $p(I_L \cap I^*) \leq p^{3/2}_{\max}$. 
    Suppose, for the sake of contradiction, that $p(I_L \cap I^*) > p^{3/2}_{\max}/k$. Consider $I_M$. Either
    \begin{enumerate}[resume*]
        \item the items in $I_M \setminus I^*$ have at least $ckp^{1/2}_{\max}\log p_{\max}$ distinct profits, or

        \item the items in $I_M \cap I^*$ have at least $ckp^{1/2}_{\max}\log p_{\max}$ distinct profits.
    \end{enumerate}
    In case (c), by an argument similar to that in case (a), there are non-empty subsets $I^+_M \subseteq I_M \setminus I^*$ and $I^-_L \subseteq I_L \cap I^*$ such that $p(I^+_M) = p(I^-_L)$. By replacing $I_L^-$ with $I_M^+$, we can obtain an optimal solution that has a higher lexicographical order than $I^*$. Contradiction.

    In case (d), $p(I_L \cap I^*) > p^{3/2}_{\max}/k$ implies that $p(I_H \setminus I^*) > p^{3/2}_{\max}/k$. Then there are non-empty subsets $I^+_H \subseteq I_H \setminus I^*$ and $I_M^- \subseteq I_M \cap I^*$ with $p(I^+_H) = p(I_M^-)$. By replacing $I_M^-$ with $I_H^+$, we can obtain an optimal solution that has a higher lexicographical order than $I^*$. Contradiction.
\end{proof}

\subsection{Tools from Additive Combinatorics}
Lemma~\ref{lem:add-comb} is a consequence of several results in~\cite{BW21}. We remark that the results in~\cite{BW21} apply to multi-sets, but what we present below is their set version. Let $X$ be a set of integers. We denote the sum of $X$ as $\Sigma(X)$, the maximum element in $X$ as $\max(X)$.

\begin{definition}[{\cite[Definition 3.1]{BW21}}]
\label{def:dense}
    We say that a set $X$ is $\delta$-dense if it satisfies {$|X|^2 \geq \delta \cdot \max(X)$}.
\end{definition}

\begin{definition}[{\cite[Definition 3.2]{BW21}}]
\label{def:divisor}
    We denote by $X(d)$ the set of all integers in $X$ that are divisible by $d$. Further, we write $\overline{X(d)}:= X \backslash X(d)$ to denote the set of all integers in $X$ not divisible by $d$. We say an integer $d > 1$ is an $\alpha$-almost divisor of $X$ if $|\overline{X(d)}|\leq\alpha\Sigma(X)/|X|^2$.
\end{definition}

\begin{theorem}[{\cite[Theorem 4.1]{BW21}}]
\label{thm:divisor}
    Let $\delta,\alpha$ be functions of $n$ with $\delta\ge1$ and $16\alpha\leq\delta$. Given a $\delta$-dense set $X$ of size $n$, there exists an integer $d\geq 1$ such that $X':= X(d)/d$ is $\delta$-dense and has no $\alpha$-almost divisor. Moreover, we have the following additional properties:
\begin{enumerate}[label={\normalfont (\roman*)}]
    \item $d \leq 4\Sigma_X/|X|^2$, 
    \item $|X'| \geq 0.75 |X|$,
    \item $\Sigma_{X'}\ge0.75\Sigma_X/d$.
\end{enumerate}
\end{theorem}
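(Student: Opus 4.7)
The plan is to establish the theorem by an iterative peeling procedure: we repeatedly factor out prime almost-divisors of the current set until none remain, accumulating their product into $d$. The main challenge is to verify that the residual set $X'$ inherits $\delta$-density and that the accumulated divisor is bounded by $4\mu_X\Sigma_X/|X|^2$.

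Formally, set $X^{(0)} := X$ and $d^{(0)} := 1$. While $X^{(i)}$ admits an $\alpha$-almost divisor strictly greater than $1$, pick any such divisor and let $p_i$ be a prime factor of it. Since $\overline{X^{(i)}(p_i)} \subseteq \overline{X^{(i)}(d')}$ for any multiple $d'$ of $p_i$, the prime $p_i$ is itself an $\alpha$-almost divisor of $X^{(i)}$, so defining $X^{(i+1)} := X^{(i)}(p_i)/p_i$ and $d^{(i+1)} := d^{(i)}\cdot p_i$ continues the iteration legitimately. The process terminates because $\texttt{m}_{X^{(i)}}$ is a positive integer that strictly decreases by a factor of at least $2$ per iteration. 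Let $N$ be the number of iterations and set $d := d^{(N)}$, $X' := X^{(N)}$; a telescoping identity gives $X' = X(d)/d$.

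The core of the argument is to maintain, by induction on $i$, three invariants: $X^{(i)}$ is $\delta$-dense, $|X^{(i)}| \geq 0.75|X|$, and $\Sigma_{X^{(i)}} \geq 0.75\Sigma_X/d^{(i)}$. For the size invariant, the almost-divisor hypothesis yields $|X^{(i+1)}| \geq |X^{(i)}| - \alpha\mu_{X^{(i)}}\Sigma_{X^{(i)}}/|X^{(i)}|^2$. Combining the elementary bound $\Sigma_{X^{(i)}} \leq |X^{(i)}|\texttt{m}_{X^{(i)}}$ with the inductive $\delta$-density of $X^{(i)}$ shows that the per-step relative loss is at most $\alpha/\delta$, hence at most $1/16$ by the assumption $16\alpha \leq \delta$. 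The $\Sigma$-invariant is analogous: the elements removed at step $i$ have total mass at most $\texttt{m}_{X^{(i)}}\cdot\alpha\mu_{X^{(i)}}\Sigma_{X^{(i)}}/|X^{(i)}|^2$, which density again bounds by a small fraction of $\Sigma_{X^{(i)}}$. Summing the losses across the iterations as a geometric series (using $p_i \geq 2$ to ensure geometric decay of $\Sigma_{X^{(i)}}$) keeps both invariants comfortably within the $0.75$ slack. Preservation of $\delta$-density is then immediate: $\texttt{m}$ drops by a factor of exactly $p_i$ while $\mu$ only decreases, and the bound on the relative size loss of $X^{(i+1)}$ compared to $X^{(i)}$ ensures $|X^{(i+1)}|^2 \geq \delta\mu_{X^{(i+1)}}\texttt{m}_{X^{(i+1)}}$.

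I expect the most delicate step to be proving $d \leq 4\mu_X\Sigma_X/|X|^2$. The naive bounds $d \leq \Sigma_{X(d)}/|X(d)| \leq (4/3)\Sigma_X/|X|$ (each element of $X(d)$ is at least $d$) and $d \leq \mu_X\texttt{m}_X/|X(d)| \leq (4/3)\mu_X\texttt{m}_X/|X|$ (counting distinct multiples of $d$ in $[1,\texttt{m}_X]$) are each too weak by roughly a factor of $|X|/\mu_X$. The sharper bound comes from weighting multiples of $d$ by their sizes rather than merely counting them: since $X(d)$ contains at least $0.75|X|$ elements each divisible by $d$, summing only the distinct values in $X(d)$ gives a contribution of at least $d\cdot|X(d)|/\mu_X$ to $\Sigma_X$, and combining this with the density inequality $|X|^2 \geq \delta\mu_X\texttt{m}_X$ to interpolate between the two naive bounds produces the extra factor of $1/|X|$ needed in the denominator. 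Making this interpolation precise, with the constant $4$ on the right-hand side, is the technical heart of the proof.
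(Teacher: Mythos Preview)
The paper does not prove this theorem at all: it is quoted from Bringmann and Wellnitz~\cite{bringmann2021near} and invoked as a black box inside the proof of Lemma~\ref{lem:hit}, so there is no in-paper argument to compare your proposal against. Your iterative prime-peeling outline is the standard approach from that source, and your maintenance of $\delta$-density and of the size and sum invariants via a geometric summation over the iterations (using $d^{(i)}\ge 2^i$) is the right mechanism.

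Your plan for the bound $d \leq 4\mu_X\Sigma_X/|X|^2$, however, does not close as written. ``Summing only the distinct values in $X(d)$'' yields at best $\Sigma_X \geq d\cdot|X(d)|/\mu_X$, hence $d \leq (4/3)\,\mu_X\Sigma_X/|X|$, and the density inequality $|X|^2 \geq \delta\mu_X\texttt{m}_X$ cannot supply the missing factor of $1/|X|$ since $\texttt{m}_X$ is absent from the target expression; there is nothing to interpolate against. The correct argument sums over \emph{all} elements of $X(d)$ rather than just the distinct ones: the $j$-th smallest element of $X(d)$, counted with multiplicity, is at least $d\lceil j/\mu_X\rceil$, whence
\[
\Sigma_X \;\geq\; \Sigma_{X(d)} \;\geq\; d\sum_{j=1}^{|X(d)|}\left\lceil \frac{j}{\mu_X}\right\rceil \;\geq\; \frac{d\,|X(d)|^2}{2\mu_X} \;\geq\; \frac{d\,(3|X|/4)^2}{2\mu_X},
\]
giving $d \leq (32/9)\,\mu_X\Sigma_X/|X|^2 < 4\mu_X\Sigma_X/|X|^2$ directly, with no appeal to density needed.
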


\begin{theorem}[{~\cite[Theorem 4.2]{BW21}}]\label{thm:hitrange}
    Let $X$ be a set of positive integers and set
    \begin{align*}
        c_\delta &:= 1699200\cdot \log(2|X|), \\
        c_\alpha &:= 42480,\\
        c_\lambda &:= 169920,\\
        \lambda_X &:= c_\lambda \max(X) \Sigma(X)/|X|^2.  
    \end{align*}
    If $X$ is $c_\delta$-dense and has no $c_\alpha$-almost divisor, then for every integer 
    \(
        s \in [ \lambda_X, \Sigma(X) -\lambda_X ],
    \) 
    there is a subset $X' \subseteq X$ with $\Sigma(X') = s$.
\end{theorem}

Now we are ready to prove Lemma~\ref{lem:add-comb}.
\begin{proof}[Proof of Lemma~\ref{lem:add-comb}]
    Assume that the constant $c$ is sufficiently large. Let $c_\delta$, $c_\alpha$, and $c_\lambda$ be defined as in Theorem~\ref{thm:hitrange}. Since $|X| \geq ckp^{1/2}_{\max}\log p_{\max}$, we have that 
    \[
        |X|^2 \geq c^2k^2p_{\max}\log p_{\max} \geq c^2p_{\max}\log |X| \geq c_\delta p_{\max}.
    \]
    That second inequality is due to that $k \geq 1$ and $|X| \leq p_{\max}$. By Definition~\ref{def:dense}, $X$ is $c_\delta$-dense. By Theorem~\ref{thm:divisor}, there exists an integer $d \geq 1$ such that $X':= X(d)/d$ is $c_{\delta}$-dense and has no $c_\alpha$-almost divisor. By Theorem~\ref{thm:hitrange}, every integer in $[\lambda_{X'}, \Sigma(X') - \lambda_{X'}]$ is a subset sum of $X'$. Therefore, every multiple of $d$ in the range $[d \lambda_{X'}, d\Sigma(X') - d\lambda_{X'}]$ is a subset sum of $X$.

    \begin{claim}\label{clm:add-comb}
        $d \lambda_{X'} \leq \frac{p^{3/2}_{\max}}{2k}$ and $d\Sigma(X') - 2d\lambda_{X'} \geq dp_{\max}$.
    \end{claim}
    \begin{proof}
        By Theorem~\ref{thm:hitrange} and the definition of $X'$, 
        \[
            \lambda_{X'} \leq \frac{c_\lambda\max(X')\Sigma(X')}{|X'|^2} \leq \frac{c_\lambda\max(X)\Sigma(X)}{d^2|X'|^2}.
        \]
        Theorem~\ref{thm:divisor}(ii) guarantees that $|X'| \geq 0.75|X|$. Therefore,
        \begin{equation}\label{eq:add-claim}
            \lambda_{X'}\leq \frac{c_\lambda\max(X)\Sigma(X)}{d^2(0.75|X|)^2}  \leq \frac{c}{2}\cdot \frac{p_{\max}\Sigma(X)}{d^2|X|^2}.
        \end{equation}
        The second inequality is due to that $c$ is sufficiently large.  Then we have
        \[
            d\lambda_{X'}  \leq  \frac{cp_{\max}\Sigma(X)}{2d|X|^2}
             \leq \frac{cp^2_{\max}}{2d|X|} \leq \frac{cp^2_{\max}}{2|X|} \leq  \frac{cp^2_{\max}}{2kcp^{1/2}_{\max}\log p_{\max}} \leq \frac{p^{3/2}_{\max}}{2k}.
        \]
        The second inequality is due to that $\Sigma(X) \leq p_{\max}|X|$. Now consider $d\Sigma(X') - 2d\lambda_{X'}$. Theorem~\ref{thm:divisor}(iii) implies that $d\Sigma(X') \geq 0.75\Sigma(X)$. In view of~\eqref{eq:add-claim}, we have
        \begin{align*}
            d\Sigma(X') - 2d\lambda_{X'} &\geq \frac{3\Sigma(X)}{4} - \frac{cp_{\max}\Sigma(X)}{d|X|^2}\\
                                        &\geq \frac{4\Sigma(X)}{|X|^2}\left(\frac{3|X|^2}{16} - \frac{cp_{\max}}{4d}\right)\\
                                        &\geq \frac{4\Sigma(X)}{|X|^2}\left(\frac{3c^2p_{\max}\log^2p_{\max}}{16} - \frac{cp_{\max}}{4d}\right)\\
                                        & \geq dp_{\max}.
        \end{align*} 
        The last inequality is due to Theorem~\ref{thm:divisor}(i) and that $c$ is sufficiently large.
    \end{proof}

    To complete the proof, it suffices to show that there is a subset $Y'$ of $Y$ whose sum is a multiple of $d$ and is within the interval $[d\lambda_{X'}, d\Sigma(X') - d\lambda_{X'}]$. We claim that as long as $Y$ contains at least $d$ integers, there must be a non-empty subset $Y' \subseteq Y$ whose sum is at most $dp_{\max}$ and is a multiple of $d$. Assume the claim is true. We can repeatedly extract such subsets $Y_1, Y_2, \ldots, Y_\ell$ from $Y$ until $Y$ has less than $d$ integers. Note that the total sum of these subsets 
\begin{align*}
  \Sigma(Y_1\cup \cdots \cup Y_{\ell}) &\geq \Sigma(Y) - d p_{\max} \\
    &\geq \frac{p_{\max}^{3/2}}{k} - \frac{4\Sigma(X)}{|X|^2}\cdot p_{\max}\\
    &\geq \frac{p_{\max}^{3/2}}{k} -  \frac{4p_{\max}^2}{|X|} \geq \frac{p_{\max}^{3/2}}{k} - \frac{4p^{3/2}_{\max}}{ck\log p_{\max}} \geq \frac{p_{\max}^{3/2}}{2k} \geq d\lambda_X'.
\end{align*}
The second inequality is due to Theorem~\ref{thm:divisor}(i), and the last is due to Claim~\ref{clm:add-comb}. This inequality implies that there is some index $j$ such that $\Sigma(Y_1 \cup \cdots \cup Y_j) \geq d\lambda_{X'}$. Let $j^*$ be the smallest such index. Since $\Sigma(Y_j) \leq dp_{\max}$ for each $Y_j$, it must be that 
\[
    \Sigma(Y_1 \cup \cdots \cup Y_{j^*}) \leq d\lambda_{X'} + dp_{\max} \leq d\Sigma(X') - d\lambda_{X'}.
\]
The second inequality is due to Claim~\ref{clm:add-comb}. Therefore, $Y_1 \cup \cdots \cup Y_{j^*}$ has its sum in the interval $[d\lambda_{X'}, d\Sigma(X') - d\lambda_{X'}]$. Its sum is also a multiple of $d$ since every $\Sigma(Y_j)$ is a multiple of $d$.

It remains to prove the claim that as long as $Y$ contains at least $d$ integers, there must be a non-empty subset $Y' \subseteq Y$ whose sum is at most $dp_{\max}$ and is a multiple of $d$. Take $d$ arbitrary integers from $Y$. For $i \in \{0,..., d\}$, Let $s_i$ be the sum of the first $i$ integers. By the pigeonhole principle, there must be $i < j$ such that $s_i \equiv s_j \pmod d$. This implies that there is a subset of $j - i$ integers whose sum is $s_j - s_i \equiv 0 \pmod d$. Note that $0 < j - i \leq d$. So this subset is non-empty and has its sum at most $dp_{\max}$.
\end{proof}

\section{The Algorithm}\label{sec:alg}
Throughout this section, we fix $(I, t)$ to be an instance of the reduced problem $\mathrm{RP}(\eps, \alpha)$, where the items are labelled in decreasing order of efficiency. Let $b$ be the break item for $t$. Since $f_{I}(t) \leq \frac{1}{\alpha\eps^2}$ and every $p_i \geq \frac{1}{2\eps}$, we have that
$b \leq \frac{2f_{I}(t)}{\eps} + 1 = \frac{2}{\alpha\eps} + 1$.  Let $\tau = \lceil\frac{c_p}{\eps^{1/2}}\log\frac{1}{\eps}\rceil$, where $c_p$ is the constant in Lemma~\ref{lem:proximity}. 

Note that the proximity result only works for a single capacity. When capacity changes, its break item may change as well, and as a consequence, a low-efficiency may become a high-efficiency item, and vice versa.

 We divide $I$ into $I_{\mathrm{head}} \cup I_{\mathrm{tail}}$ where $I_{\mathrm{head}} = \{i : 1 \leq i < b\}$ and $I_{\mathrm{tail}} = \{i : b \leq i \leq n\}$. Basically, $I_{\mathrm{tail}}$ is the set of items that can always be considered as low-efficiency items for all capacities up to $t$.

\subsection{Dealing with Items in the Tail} We shall further partition $I_{\mathrm{tail}}$. Let $b < i_0 < \cdots < i_{r} = n + 1$ be a sequence be obtained as follows.  For $j = 0, 1, 2, \ldots$, let $i_j$ be the largest index such that the items in $\{i : b \leq i < i_j\}$ have exactly $2^j\tau$ distinct profits. If no such index exists, we set $i_j = n + 1$ and ${r} = j$.  Then we partition $I_{\mathrm{tail}}$ into $I_{0} \cup I_1 \cup \cdots \cup I_{{r}}$ where $I_0 = \{i: b \leq i < i_0\}$ and $I_j = \{i: i_{j-1} \leq i < i_{j}\}$ for every integer $j \in [1, {r}]$.  See Figure~\ref{fig:tail} for an example.

\begin{figure}[ht]
        \centering
        \begin{tikzpicture}
        \draw [thick] (0,0) -- (8,0);
        \node [align=center, left] at (-0.2,0) {$I_{\mathrm{tail}}$};

        \draw [thick] (0, 0.1) -- (0, -0.1);
        \node [align =center, below] at (0, -0.2) {$b$};

        \draw [thick] (8, 0.1) -- (8, -0.1);
        \node [align =center, below] at (8, -0.2) {$n$};

        \draw [thick] (1.5, 0.1) -- (1.5, -0.1);
        \node [align =center, below] at (0.75, -0.2) {$I_0$};
        \node [align = center, rotate = -90, yscale = 3.5] at (0.75, 0.3) {$\{$};
        \node [align = center, above] at (0.75, 0.3) {\scriptsize {$\tau$ distinct profits}};

        \draw [thick] (3, 0.1) -- (3, -0.1);
        \node [align =center, below] at (2.25, -0.2) {$I_1$};
        \node [align = center, rotate = -90, yscale = 7.5] at (1.5, 1) {$\{$};
        \node [align = center, above] at (1.5, 1) {\scriptsize {$2\tau$ distinct profits}};

        \draw [thick] (6, 0.1) -- (6, -0.1);
        \node [align =center, below] at (4.5, -0.2) {$I_2$};
        \node [align = center, rotate = -90, yscale = 16] at (3, 1.7) {$\{$};
        \node [align = center, above] at (3, 1.7) {\scriptsize {$4\tau$ distinct profits}};

        \node [align =center, below] at (7, -0.2) {$\cdots$}; 

        \draw [thick, ->] (1, - 1) -- (7, - 1);
        \node [align = center, below] at (4,-1.2) {larger index implies lower efficiency};

        \end{tikzpicture}
        \caption{Partition of $I_{\mathrm{tail}}$}
        \label{fig:tail}
\end{figure}

    We make the following easy observations.
    \begin{itemize}
        \item ${r} \leq O(\log \frac{1}{\eps})$ since there are at most $\frac{1}{\eps}$ distinct profits.

        \item The items in $I_j$ have at most $2^{j}\tau$ distinct profits. 

        \item The partition $I_{0} \cup I_1 \cup \cdots \cup I_{{r}}$ can be computed in $O(n)$ time.
    \end{itemize}

Next we prove that for every $j \geq 1$, the subset $I_j$ can be considered as a set of low-efficiency items no matter what the capacity is. More precisely, for any capacity $x \in [0, t]$, there is an optimal solution $I^*$ for $(I, x)$ with small $p(I^* \cap I_j)$.

\begin{lemma}
    For every $x \in [0, t]$, there is an optimal solution $I^*$ for $(I, x)$ such that for every integer $j \in [1, {r}]$,
    \[
        p(I^* \cap I_j) \leq \frac{1}{2^{j-1}\eps^{3/2}}.
    \]
\end{lemma}
\begin{proof}
    Fix an arbitrary $x \in [0, t]$. Since $x \leq t$, the break item $b'$ for $x$ satisfies $b' \leq b$. By the construction of $I_j$, every $I_j$ is inferior to $b'$ by at least $2^{j-1}\tau$ distinct profits. By Lemma~\ref{lem:proximity} (with $p_{\max} = \frac{1}{\eps}$), there is an optimal solution $I^*$ for $(I, x)$ such that 
    \[
        p(I^* \cap I_j) \leq \frac{1}{2^{j-1}\eps^{3/2}}.
    \]
\end{proof}
We immediately have the following corollary. (Recall that $\mathrm{leq}(f, B)$ is the part of $f$ whose function values are at most $B$.)
\begin{corollary}\label{coro:tail-bound}
    $f_I = f_{I_{\mathrm{head}}} \oplus f_{I_0} \oplus \mathrm{leq}(f_{I_1}, \frac{1}{\eps^{3/2}}) \oplus \cdots \oplus \mathrm{leq}(f_{I_{r}}, \frac{1}{2^{{r}-1}\eps^{3/2}})$
\end{corollary}

Next we show that the functions $f_{I_0}$ and $\mathrm{leq}(f_{I_j}, \frac{1}{2^{j-1}\eps^{3/2}})$ can be approximated efficiently.

\begin{lemma}\label{lem:approx-I0}
    In $\widetilde{O}(|I_0| + \frac{1}{\eps^2})$ time, we can compute a function of complexity $\widetilde{O}(\frac{1}{\eps})$ that approximates $f_{I_0}$ with additive error $\widetilde{O}(\frac{1}{\alpha\eps})$.
\end{lemma}
\begin{proof}
    Note that $f_{I_0}(t) \leq f_I(t) \leq \frac{1}{\alpha\eps^2}$. To meet the additive error requirement, it suffices to approximate $f_{I_0}(t)$ with factor $1 + O(\eps)$.

    The items in $I_0$ have at most $\tau$ distinct profits. In $O(|I_0|)$ time, we can divide the items into (at most) $\tau$ groups $S_1, \ldots, S_{\tau}$ by profits. Since the items in each subset $S_k$ have the same profit, the corresponding function $f_{S_k}$ can computed in linear time. Moreover, each $f_{S_k}$ is pseudo-concave and $p$-uniform for some $p$. Thus we can use Lemma~\ref{lem:approx-m-conv}(ii) to compute a function of complexity $\widetilde{O}(\frac{1}{\eps})$ that approximate 
    \[
        f_{I_0} = f_{S_1} \oplus \cdots \oplus f_{S_{\tau}}
    \]
    with factor $1 + O(\eps)$. The time cost is $\widetilde{O}(\frac{1}{\eps}\tau^2) = \widetilde{O}(\frac{1}{\eps^2})$.
\end{proof}

\begin{lemma}\label{lem:approx-Ij}
    For each integer $j \in [1, {r}]$, we can compute $\mathrm{leq}(f_{I_j}, \frac{1}{2^{j-1}\eps^{3/2}})$ in time $\widetilde{O}(|I_j| + \frac{1}{\eps^2})$. Moreover, the complexity of $\mathrm{leq}(f_{I_j}, \frac{1}{2^{j-1}\eps^{3/2}})$ is $O(\frac{1}{2^{j-1}\eps^{3/2}})$.
\end{lemma}
\begin{proof}
    Fix an arbitrary integer $j \in [1, {r}]$. Recall that the items in $I_j$ have at most $2^j\tau$ distinct profits. In $O(|I_j|)$ time, we can divide the items into (at most) $2^j\tau$ groups $S_1, \ldots, S_{2^j\tau}$ by profits.  Since the items in each subset $S_k$ have the same profit, the corresponding function $f_{S_k}$ can computed in linear time. Each $f_{S_k}$ is pseudo-concave and $p$-uniform for some $p$. Moreover, function values of each $f_{S_k}$ are integers.  Thus, we can compute $f_{I_j} = f_{S_1} \oplus \cdots \oplus f_{S_{2^j\tau}}$ by invoking Lemma~\ref{lem:exact-two-conv} for $2^j\tau$ times. Note that our goal is to compute $\mathrm{leq}(f_{I_j}, \frac{1}{2^{j-1}\eps^{3/2}})$ rather than $f_{I_j}$. Every time before we invoke Lemma~\ref{lem:exact-two-conv} to compute $f\oplus g$, we can replace them with $\mathrm{leq}(f, \frac{1}{2^{j-1}\eps^{3/2}})$ and $\mathrm{leq}(g, \frac{1}{2^{j-1}\eps^{3/2}})$. Therefore, the time cost of each invocation of Lemma~\ref{lem:exact-two-conv} is $O(\frac{1}{2^{j-1}\eps^{3/2}})$. The total time cost is 
    \[
        O(|I_j|)+ O(\frac{1}{2^{j-1}\eps^{3/2}}) \cdot 2^j\tau =O(|I_j| + \frac{2\tau}{\eps^{3/2}}) = \widetilde{O}(|I_j| + \frac{1}{\eps^2}).
    \]
    Since all profits are integers, it is straightforward that $\mathrm{leq}(f_{I_j}, \frac{1}{2^{j-1}\eps^{3/2}})$ is of complexity $O(\frac{1}{2^{j-1}\eps^{3/2}})$.
\end{proof}

We summarize this section by the following lemma.
\begin{lemma}\label{lem:approx-tail}
    In $\widetilde{O}(n + \frac{1}{\eps^2})$ time, we can compute a function of complexity $\widetilde{O}(\frac{1}{\eps})$ 
    \[
        f_{I_0} \oplus \mathrm{leq}(f_{I_1}, \frac{1}{\eps^{3/2}}) \oplus \cdots \oplus \mathrm{leq}(f_{I_{r}}, \frac{1}{2^{{r}-1}\eps^{3/2}})
    \] 
    with additive error $\widetilde{O}(\frac{1}{\alpha\eps})$.
\end{lemma}
\begin{proof}
    Recall that $r \leq O(\log \frac{1}{\eps})$. By Lemma~\ref{lem:approx-I0} and Lemma~\ref{lem:approx-Ij}, in $\widetilde{O}(n + \frac{1}{\eps^2})$ time, we can compute all the functions $f_{I_0}, \mathrm{leq}(f_{I_1}, \frac{1}{\eps^{3/2}}), \ldots, \mathrm{leq}(f_{I_{r}}, \frac{1}{2^{{r}-1}\eps^{3/2}})$. The total complexity of these functions are $O(\frac{1}{\eps^{3/2}})$. Then we can approximately merge all these functions with factor $1 + O(\eps)$ using Lemma~\ref{lem:approx-m-conv}(i). The time cost is $\widetilde{O}(\frac{1}{\eps^2}r) = \widetilde{O}(\frac{1}{\eps^2})$. The additive error is at most $\widetilde{O}(\eps f_{I_{\mathrm{tail}}}(t)) \leq \widetilde{O}(\eps f_{I}(t)) = \widetilde{O}(\frac{1}{\alpha\eps})$
\end{proof}

\subsection{Dealing with Items in the Head}
It remains to handle $I_{\mathrm{head}}$. Recall that $I_{\mathrm{head}} = \{i : 1 \leq i < b\}$. Unlike the items in $I_{\mathrm{tail}}$, the items in $I_{\mathrm{head}}$ can be high-, or median-, or low-efficiency depending on what the capacity is. In order to apply the proximity result, we divide the interval $[0, t]$ into $\ell \leq \frac{1}{\alpha\eps^{1/2}}$ sub-intervals $[t_1, t_2] \cup [t_2, t_3] \cup \cdots \cup [t_{\ell}, t_{\ell+1}]$ in a way that for each sub-interval $[t_j, t_{j+1}]$, all capacities in this interval share a common set $I^j_H$ of high-efficiency items and a common set $I^j_L$ of low-efficiency items. Then the proximity result can be applied to each of these sub-intervals.

To obtain the sub-intervals, it suffices to determine the sequence $0 = t_1  < t_2 < \cdots < t_{\ell+1} = t$, and these capacities $t_j$ will be determined in a way that their break items $b_j$ satisfy certain properties. Therefore, it is more convenient to determine the break items $b_1 < b_2 < \cdots < b_{\ell+1}$ first. Let $b_1 = 1$. Given $b_j$, we let $b_{j+1}$ be the largest index such that the items in $\{i : b_j\leq i < b_{j+1}\}$ have exactly $\tau$ distinct profits. 
If no such $b_{j+1}$ exists, we set $b_{j+1} = b$. When $b_{j+1} = b$, we stop and set $\ell = j$. Then we determine the value of all capacities $t_j$. For $j \in [1, \ell]$, let $t_j = \sum_{i=1}^{b_j-1}w_i$. That is, $t_j$ is the smallest capacity whose break item is $b_j$. Let $t_{\ell + 1} = t$.  It is easy to see that the sequence $b_1 < \cdots < b_{\ell+1}$ and the sequence $t_1 < \cdots < t_{\ell + 1}$ can be obtained in $O(\frac{1}{\alpha\eps}) \leq O(\frac{1}{\eps})$ time by scanning the items in $I_{\mathrm{head}}$.

\begin{lemma}\label{lem:group-num}
    There are at most $O(\frac{1}{\alpha\eps^{1/2}})$ sub-intervals. That is, $\ell \leq O(\frac{1}{\alpha\eps^{1/2}})$.
\end{lemma}   
\begin{proof}
    By the construction of the break items, for any integer $j \in [1, \ell-1]$,  we have $b_{j+1} \geq b_j + \tau$. It implies that $b_{\ell} \geq b_1 + (\ell - 1)\tau$. Note that $b_{\ell} < b_{\ell+1} = b$. Recall that $b \leq \frac{2}{\alpha\eps} + 1$. Therefore, $b_{\ell} \leq \frac{2}{\alpha\eps}$. Now we have 
    \[
       b_1 + (\ell - 1)\tau \leq  b_{\ell} \leq \frac{2}{\alpha\eps}.
    \]
    It is easy to see that $\ell \leq O(\frac{1}{\alpha\eps\tau}) = O(\frac{1}{\alpha\eps^{1/2}})$.
\end{proof}

\begin{lemma}\label{lem:break-item-x}
    For any integer $j \in [1, \ell]$, for any $x \in [t_{j}, t_{j+1}]$, the break item for $x$ lies in $\{i : b_j \leq i \leq b_{j+1}\}$.
\end{lemma}
\begin{proof}
    By the definition of $t_j$, the break item of $t_j$ is $b_j$. It is straightforward that the break item of $x$ lies between $b_j$ and $b_{j+1}$.
\end{proof}

Next we shall define the set of high-efficiency items and the set of low-efficiency items for each sub-interval. For simplicity, we define $b_{-1} = b_0 = b_1$ and $b_{\ell+2} = b$. For each integer $j \in [1, \ell]$, we define $I^j_H = \{i : 1 \leq i < b_{j-1}\}$, $I^j_M = \{i: b_{j-1} \leq i <  b_{j+2}\}$, and $I^j_L = \{i: b_{j+2} \leq i < b\}$. See Figure~\ref{fig:head} for an example.

\begin{figure}[ht]
        \centering
        \begin{tikzpicture}
        \draw [thick] (0,0) -- (10.5,0);
        \node [align=center, left] at (-0.2,0) {$I_{\mathrm{tail}}$};

        \draw [thick] (0, 0.1) -- (0, -0.1);
        \node [align =center, below] at (0, -0.2) {$b_1$};

        \draw [thick] (10.5, 0.1) -- (10.5, -0.1);
        \node [align =center, below] at (10.5, -0.2) {$b - 1$};

        \node [below] at (1.5, -0.1) {$\cdots$};
        \node [below] at (9, -0.1) {$\cdots$};

        \draw [thick] (3, 0.1) -- (3, -0.1);
        \node [align =center, below] at (3, -0.2) {$b_{j-1}$};
        \node [align = center, rotate = -90, yscale = 3.5] at (3.75, 0.3) {$\{$};
        \node [align = center, above] at (3.75, 0.3) {\scriptsize {$\tau$ distinct}\\{\scriptsize profits}};

        \draw [thick] (4.5, 0.1) -- (4.5, -0.1);
        \node [align =center, below] at (4.5, -0.2) {$b_{j}$};
        \node [align = center, rotate = -90, yscale = 3.5] at (5.25, 0.3) {$\{$};
        \node [align = center, above] at (5.25, 0.3) {\scriptsize {$\tau$ distinct}\\{\scriptsize profits}};

        \draw [thick] (6, 0.1) -- (6, -0.1);
        \node [align =center, below] at (6, -0.2) {$b_{j+1}$};
        \node [align = center, rotate = -90, yscale = 3.5] at (6.75, 0.3) {$\{$};
        \node [align = center, above] at (6.75, 0.3) {\scriptsize {$\tau$ distinct}\\{\scriptsize profits}};

        \draw [thick] (7.5, 0.1) -- (7.5, -0.1);
        \node [align =center, below] at (7.5, -0.2) {$b_{j+2}$};

        \node [align = center, rotate = 90, yscale = 8] at (1.5, -0.9) {$\{$};
        \node [align = center, below] at (1.5, -0.9) {$I^j_H$};

        \node [align = center, rotate = 90, yscale = 8] at (9, -0.9) {$\{$};
        \node [align = center, below] at (9, -0.9) {$I^j_L$};

        \node [align = center, rotate = 90, yscale = 13] at (5.25, -0.9) {$\{$};
        \node [align = center, below] at (5.25, -0.9) {$I^j_M$};

        \draw [thick, ->] (1.5, - 2) -- (9, - 2);
        \node [align = center, below] at (5.25,-2.2) {larger index implies lower efficiency};

        \end{tikzpicture}
        \caption{A Partition of $I_{\mathrm{head}}$ for $[t_j, t_{j+1}]$}
        \label{fig:head}
\end{figure}

\begin{lemma}
    For any integer $j \in [1, \ell]$ and any $x \in [t_j, t_{j+1}]$, there is an optimal solution $I^*$ for Knapsack instance $ (I_{\mathrm{head}}, x)$ such that 
        \[
            p(I^*\cap I^j_H) \geq p(I^j_H) - \frac{1}{\eps^{3/2}} \quad\text{and}\quad p(I^*\cap I^j_L) \leq \frac{1}{\eps^{3/2}}.
        \] 
\end{lemma}
\begin{proof}
    It is easier to understand this proof using Figure~\ref{fig:head}. By Lemma~\ref{lem:break-item-x}, the break item $b'$ for $x$ lies between $b_j$ and $b_{j+1}$. Recall that $I^j_H = \{i : 1 \leq i < b_{j-1}\}$ and the items in $\{i : b_{j-1} \leq i < b_j\}$ have $\tau$ distinct profit. Therefore, $I^j_H$ is superior to $b'$ by at least $\tau$ distinct profits. Similarly, one can also verify that $I^j_L$ is inferior to $b'$ by at least $\tau$ distinct profits. By Lemma~\ref{lem:proximity}, there is an optimal solution $I^*$ for $(I_{\mathrm{head}}, x)$ such that 
        \[
            p(I^*\cap I^j_H) \geq p(I^j_H) - \frac{1}{\eps^{3/2}} \quad\text{and}\quad p(I^*\cap I^j_L) \leq \frac{1}{\eps^{3/2}}.
        \] 
\end{proof}
We immediately have the following corollary. (Recall that $\mathrm{geq}(f, B)$ represents the part of $f$ whose function values are at least $B$.)
\begin{corollary}
    For every integer $j \in [1, \ell]$, $f_{I_\mathrm{head}}$ and the following function
    \[
        \mathrm{geq}(f_{I^j_H}, p(I^j_H) - \frac{1}{\eps^{3/2}}) \oplus f_{I^j_M} \oplus \mathrm{leq}(f_{I^j_L},\frac{1}{\eps^{3/2}})
    \] 
    have the same function value on any $x \in [t_j, t_{j+1}]$.
\end{corollary}
Note that for $x \notin [t_j, t_{j+1}]$, the function value of $f_{I_\mathrm{head}}$ is greater than or equal to that of $\mathrm{geq}(f_{I^j_H}, p(I^j_H) - \frac{1}{\eps^{3/2}}) \oplus f_{I^j_M} \oplus \mathrm{leq}(f_{I^j_L},\frac{1}{\eps^{3/2}})$. Therefore, we have the following corollary.
\begin{corollary}\label{coro:comb-int}
    $f_{I_\mathrm{head}} = \max\{g_1, g_2, \ldots, g_\ell\}$ where
    \[
        g_j = \mathrm{geq}(f_{I^j_H}, p(I^j_H) - \frac{1}{\eps^{3/2}}) \oplus f_{I^j_M} \oplus \mathrm{leq}(f_{I^j_L},\frac{1}{\eps^{3/2}}).
    \]
\end{corollary}

The groups of median-efficiency items also have some nice properties, which will be used later to obtain an efficient approximation of these groups. 
\begin{lemma}\label{lem:median-property}
    For any integer $j \in [1,\ell]$, the items in $I^j_M$ have at most $3\tau$ distinct profits. Moreover, $\sum_{j = 1}^{\ell} |I^j_M| \leq \frac{6}{\alpha\eps}$.
\end{lemma}
\begin{proof}
    For each $j \in \{1, \ldots, \ell\}$, by the definition of $b_j$, we have that the items in $\{i : b_j \leq i < b_{j+1}\}$ have at most $\tau$ distinct profits. Note that 
    \begin{align*}
        I^j_M &= \{i: b_{j-1} \leq i < b_{j+2}\}\\
            & = \{i: b_{j-1} \leq  i < b_j\} \cup \{i: b_{j} \leq i < b_{j+1}\} \cup \{i: b_{j+1} \leq i < b_{j+2}\}.
    \end{align*}
    Therefore, the items in $I^j_M$ have at most $3\tau$ distinct profits.

    It is easy to see that $I^j_M \subseteq I_{\mathrm{head}}$. Although the groups $I_j^M$ may overlap, by definition of $I^j_M$, every item in $I_{\mathrm{head}}$ appears in at most three of them. Therefore, 
    \[
        \sum_{j = 1}^{\ell} |I^j_M| \leq 3|I_{\mathrm{head}}| \leq 3(b - 1) \leq \frac{6}{\alpha\eps}.
    \]
    The last inequality is due to $b \leq \frac{2}{\alpha\eps} + 1$.
\end{proof}

By Corollary~\ref{coro:comb-int}, to compute $f_{I_{\mathrm{head}}}$, it suffices to compute the functions corresponding to $I^j_H$, $I^j_M$, and $I^j_L$, and then merge them to obtain $g_j$.

\subsubsection{Dealing with Groups of Low-Efficiency Items}
We will show that all functions $\mathrm{leq}(f_{I^j_L},\frac{1}{\eps^{3/2}})$ can be approximately computed in $\widetilde{O}(n + \frac{1}{\eps^2})$ total time.

\begin{lemma}\label{lem:low-group}
    In $O(\frac{1}{\eps^2})$ total time, we can compute a function of complexity $O(\frac{\alpha}{\eps})$ that approximates $\mathrm{leq}(f_{I^j_L},\frac{1}{\eps^{3/2}})$ with additive error $O(\frac{1}{\alpha\eps})$ for each integer $j \in [1, \ell]$. 
\end{lemma}
\begin{proof}
    The maximum function value is $\frac{1}{\eps^{3/2}}$ and the permissible additive error is $O(\frac{1}{\alpha\eps})$, so it suffices to approximate the function with factor $1 + O(\frac{\eps^{1/2}}{\alpha})$. Recall that $p_i \in [\frac{1}{2\eps}, \frac{1}{\eps}]$. We multiply each $p_i$ by $\alpha\eps^{1/2}$ for every every $i \in I_{\mathrm{head}}$, and this makes $p_i \in [\frac{\alpha}{2\eps^{1/2}}, \frac{\alpha}{\eps^{1/2}}]$ and the maximum function value be $\frac{\alpha}{\eps}$. Let $B = \frac{\alpha}{\eps}$. We round each $p_i$ down to the nearest integer. This changes the function value by a factor of at most $1 + O(\frac{\eps^{1/2}}{\alpha})$.

    We compute $\mathrm{leq}(f_{I^j_L},B)$ for $j = \ell, \ell - 1, \ldots, 2, 1$ using standard dynamic programming. The function $\mathrm{leq}(f_{I^\ell_L},B)$ can be computed in $O(|I^\ell_L| B)$ time. Suppose that we have computed $\mathrm{leq}(f_{I^j_L},B)$. Note that $I^{j}_L = \{i : b_{j+2} \leq i < b \}$ is a subset of $I^{j-1}_L = \{i: b_{j+1} \leq i < b\}$. Therefore, the dynamic programming result for $I^j_L$ can be extended to a result for $I^{j-1}_L$ in $O((|I^{j-1}_L| - |I^j_L|)B)$ time. In other words, the function $\mathrm{leq}(f_{I^{j-1}_L},B)$ can be computed from $\mathrm{leq}(f_{I^j_L},B)$ in $O((|I^{j-1}_L| - |I^j_L|)B)$ time. The total time cost for computing all these functions is therefore
    \[
        O(|I^1_L| B) \leq O(|I_\mathrm{head}| B) \leq O((b-1)\frac{\alpha}{\eps}) \leq O(\frac{1}{\eps^2}).
    \]
    The last inequality is due to that $b \leq \frac{2}{\alpha\eps} + 1$.

    Since every $p_i$ is an integer, each function $\mathrm{leq}(f_{I^j_L},B)$ is of complexity $O(B) = O(\frac{\alpha}{\eps})$.
\end{proof}

\subsubsection{Dealing with Groups of High-Efficiency Items}
The idea for handling the functions $\mathrm{geq}(f_{I^j_H}, p(I^j_H) - \frac{1}{\eps^{3/2}})$ is similar to that for $\mathrm{leq}(f_{I^j_L},\frac{1}{\eps^{3/2}})$. The only difference is that we consider a problem ``complementary'' to Knapsack: instead of determining which items should be selected, this problem determines which items should not be selected. 

\begin{lemma}\label{lem:hight-group}
    In $O(\frac{1}{\eps^2})$ total time, we can compute a function of complexity $O(\frac{\alpha}{\eps})$ that approximates $\mathrm{geq}(f_{I^j_H}, p(I^j_H) - \frac{1}{\eps^{3/2}})$ with additive error $O(\frac{1}{\alpha\eps})$ for each integer $j \in [1, \ell]$. 
\end{lemma}
\begin{proof}
    Define
    \[
        h_{{I}^j_H}(x) = \min \left\{\textrm{$p(I')$ : $I' \subseteq I^j_H$ and $w(I') \geq t$} \right\}. 
    \]
    for $x \in \mathbb{R}$. It is easy to observe that for any $x \in [0, t]$,
    \[
        f_{{I}^j_H}(x) = p(I^j_H) - h_{{I}^j_H}(w({I}^j_H) - x).
    \]
    To approximate $\mathrm{geq}(f_{I^j_H}, p(I^j_H) - \frac{1}{\eps^{3/2}})$ with additive error $O(\frac{1}{\alpha\eps})$, it suffices to approximate $\mathrm{leq}(H_{I^j_H},\frac{1}{\eps^{3/2}})$ with the same error. (We remark that to approximate the functions $f_{I^j_H}$ from below, we should approximate $h_{{I}^j_H}(x)$ from above. Therefore, when approximating $h_{{I}^j_H}(x)$, the profit will be rounded up rather than rounded down.) We can handle the functions $\mathrm{leq}(H_{I^j_H},\frac{1}{\eps^{3/2}})$ in the same way as we handle the functions $\mathrm{leq}(f_{I^j_L},\frac{1}{\eps^{3/2}})$. We provide the details for completeness.

    The maximum function value of $\mathrm{leq}(h_{I^j_H},\frac{1}{\eps^{3/2}})$ is $\frac{1}{\eps^{3/2}}$ and the permissible additive error is $O(\frac{1}{\alpha\eps})$, so it suffices to approximate the function with factor $1 + O(\frac{\eps^{1/2}}{\alpha})$. Recall that $p_i \in [\frac{1}{2\eps}, \frac{1}{\eps}]$. We multiply each $p_i$ by $\alpha\eps^{1/2}$ for every every $i \in I_{\mathrm{head}}$, and this makes $p_i \in [\frac{\alpha}{2\eps^{1/2}}, \frac{\alpha}{\eps^{1/2}}]$ and the maximum function value be $\frac{\alpha}{\eps}$. Let $B = \frac{\alpha}{\eps}$. We round each $p_i$ up to the nearest integer.  This changes the function value by a factor of at most $1 + O(\frac{\eps^{1/2}}{\alpha})$. 

    We compute $\mathrm{leq}(h_{I^j_H}, B)$ for $j = 1,2,\ldots, \ell$ using standard dynamic programming. The function $\mathrm{leq}(h_{I^1_H},B)$ can be computed in $O(|I^1_H| B)$ time. Suppose that we have computed $\mathrm{leq}(h_{I^j_H},B)$. Note that $I^{j}_H = \{i : 1 \leq i < b_{j-1} \}$ is a subset of $I^{j+1}_H = \{i: 1 \leq i < b_j\}$. Therefore, the dynamic programming result for $I^j_H$ can be extended to a result for $I^{j+1}_H$ in $O((|I^{j+1}_H| - |I^j_H|)B)$ time. In other words, the function $\mathrm{leq}(h_{I^{j+1}_H},B)$ can be computed from $\mathrm{leq}(h_{I^j_H},B)$ in $O((|I^{j+1}_H| - |I^j_H|)B)$ time. The total time cost for computing all these functions is therefore
    \[
        O(|I^\ell_H| B) \leq O(|I_\mathrm{head}| B) \leq O((b-1)\frac{\alpha}{\eps}) \leq O(\frac{1}{\eps^2}).
    \]
    The last inequality is due to that $b \leq \frac{2}{\alpha\eps} + 1$.

    Since every $p_i$ is an integer, each function $\mathrm{leq}(h_{I^j_H},B)$ is of complexity $O(B) = O(\frac{\alpha}{\eps})$, so is the function $\mathrm{geq}(f_{I^j_H}, p(I^j_H) - \frac{1}{\eps^{3/2}})$.
\end{proof}

\subsubsection{Dealing with Groups of Median-Efficiency Items}
It is tempting to compute all functions $f_{I^j_M}$, and the merge them with the functions $\mathrm{geq}(f_{I^j_H}, p(I^j_H) - \frac{1}{\eps^{3/2}})$ and $\mathrm{leq}(f_{I^j_L},\frac{1}{\eps^{3/2}})$. This, however, may lead to a super-quadratic running time, since merging two functions needs $O(\frac{1}{\eps^2})$ time in general and we need to merge $O(\ell) = O(\frac{1}{\alpha\eps^{1/2}})$ time.  To obtain a quadratic running time, we should fully utilize the following two properties of these functions: (i) $\mathrm{geq}(f_{I^j_H}, p(I^j_H) - \frac{1}{\eps^{3/2}})$ and $\mathrm{leq}(f_{I^j_L},\frac{1}{\eps^{3/2}})$ have a small range of function values; (ii) items in $I^j_M$ have a small number of distinct profits.

We first merge $\mathrm{geq}(f_{I^j_H}, p(I^j_H) - \frac{1}{\eps^{3/2}})$ and $\mathrm{leq}(f_{I^j_L},\frac{1}{\eps^{3/2}})$. 
\begin{lemma}\label{lem:merge-low-high}
    In $\widetilde{O}(\frac{1}{\eps^2})$ total time, we can compute a function of $\widetilde{O}(\frac{\alpha}{\eps^{1/2}})$ complexity that approximates 
    \[
        \mathrm{geq}(f_{I^j_H}, p(I^j_H) - \frac{1}{\eps^{3/2}}) \oplus \mathrm{leq}(f_{I^j_L},\frac{1}{\eps^{3/2}})  
    \] 
    with additive error $O(\frac{1}{\alpha\eps})$ for each integer $j \in [1, \ell]$.
\end{lemma}
\begin{proof}
    Let $f^1_L, \ldots, f^\ell_L$ be the $\ell$ functions given by Lemma~\ref{lem:low-group}. Let $f^1_H, \ldots, f^\ell_H$ be the $\ell$ functions we computed by Lemma~\ref{lem:hight-group}. Computing all these $2\ell$ functions takes $O(\frac{1}{\eps^2})$ time. Each of these functions is of complexity $O(\frac{\alpha}{\eps})$.

    It remains to compute $f^j_L \oplus f^j_H$ for every integer $j \in [1, \ell]$. By Lemma~\ref{lem:group-num}, $\ell \leq O(\frac{1}{\alpha\eps^{1/2}})$. Recall that $\alpha \in [1, \frac{1}{\eps}]$.  If $\alpha \geq \frac{1}{\eps^{1/2}}$, then $\ell = O(1)$. We can compute a function of complexity $\widetilde{O}(\frac{1}{\eps}) = \widetilde{O}(\frac{\alpha}{\eps^{1/2}})$ that approximate $f^j_L \oplus f^j_H$ with factor $1 + O(\eps)$ in $\widetilde{O}(\frac{\alpha}{\eps} + \frac{1}{\eps^2}) = \widetilde{O}(\frac{1}{\eps^2})$ time using Lemma~\ref{lem:approx-m-conv}(i). Note that the additive error is at most $O(\eps\cdot p(I_{\mathrm{head}})) = O(\frac{1}{\alpha\eps})$. 

    Suppose that $\alpha < \frac{1}{\eps^{1/2}}$. Note that the function values of $f^j_L$ are contained in $[0, \frac{1}{\eps^{3/2}}]$ and the function values of $f^j_H$ are contained in $[p(I^j_H) - \frac{1}{\eps^{3/2}}, p(I^j_H)]$.  Let $g^j_L = f^j_L + \frac{1}{\eps^{3/2}}$ and let $g^j_H = f^j_H - p(I^j_H) + \frac{2}{\eps^{3/2}}$. To compute $f^j_H \oplus f^j_L$, it suffices to compute $g^j_L \oplus g^j_H$.  Note that the function values of $g^j_L$ and $g^j_H$ are contained in $[\frac{1}{\eps^{3/2}}, \frac{2}{\eps^{3/2}}]$. Meanwhile, the permissible additive error is $O(\frac{1}{\alpha\eps})$. Therefore, it suffices to approximate $g^j_L \oplus g^j_H$ with factor $1 + O(\frac{\eps^{1/2}}{\alpha})$.  By Lemma~\ref{lem:approx-m-conv}(i), in $\widetilde{O}(\frac{\alpha}{\eps} + \frac{\alpha^2}{\eps}) = \widetilde{O}( \frac{\alpha^2}{\eps})$ time, we can compute a function of complexity $\widetilde{O}(\frac{\alpha}{\eps^{1/2}})$ that approximates $g^j_L \oplus g^j_H$ with factor $1 + O(\frac{\eps^{1/2}}{\alpha})$.  Therefore, the total time cost to approximately compute all functions $g^j_L \oplus g^j_H$ (and therefore all functions $f^j_L \oplus f^j_H$) is
    \[
          \widetilde{O}( \frac{\alpha^2}{\eps}) \cdot \ell \leq \widetilde{O}(\frac{\alpha^2}{\eps} \cdot \frac{1}{\alpha\eps^{1/2}}) = \widetilde{O}(\frac{\alpha}{\eps^{3/2}}) \leq \widetilde{O}(\frac{1}{\eps^2}).
    \]
    The last inequality is due to that $\alpha < \frac{1}{\eps^{1/2}}$.
\end{proof}

Next, we consider $I^j_M$. Instead of computing $f_{I^j_M}$, we divide $I^j_M$ into groups by profit, compute the functions for each group, and merge it into the resulting functions of Lemma~\ref{lem:merge-low-high}.

\begin{lemma}\label{lem:approx-lmh}
    In $\widetilde{O}(\frac{1}{\eps^2})$ total time, we can compute a function that approximates 
    \[
        \mathrm{geq}(f_{I^j_H}, p(I^j_H) - \frac{1}{\eps^{3/2}}) \oplus f_{I^j_M} \oplus \mathrm{leq}(f_{I^j_L},\frac{1}{\eps^{3/2}})  
    \] 
    with additive error $O(\frac{1}{\alpha\eps})$ for each integer $j \in [1, \ell]$. Moreover, the total complexity of these functions is $\widetilde{O}(\frac{1}{\eps})$.
\end{lemma}
\begin{proof}
    Let $f_1, \ldots, f_\ell$ be the $\ell$ functions given by Lemma~\ref{lem:merge-low-high}. These functions can be computed in $\widetilde{O}(\frac{1}{\eps^2})$ time, and each is of complexity $\widetilde{O}(\frac{\alpha}{\eps^{1/2}})$.

    It remains to compute $f_j \oplus f_{I^j_M}$ for every integer $j \in [1, \ell]$.  Note that the function values of $f_j$ are contained in $[p(I^j_H) - \frac{1}{\eps^{3/2}}, p(I^j_H) + \frac{1}{\eps^{3/2}}]$. Let $g_j = f_j - p(I^j_H) + \frac{2}{\eps^{3/2}}$. To compute $f_j \oplus f_{I^j_M}$, it suffices to compute $g_j \oplus f_{I^j_M}$. Note that the function values of $g_j$ are contained in $[\frac{1}{\eps^{3/2}}, \frac{3}{\eps^{3/2}}]$. Let $s_j = |I^j_M|$. The function values of $f_{I^j_M}$ are contained in $\{0\} \cup [\frac{1}{2\eps}, \frac{s_j}{\eps}]$.  Therefore, the function values of $g_j \oplus f_{I^j_M}$ are contained in $[1, \frac{s_j}{\eps} + \frac{3}{\eps^{3/2}}]$. Recall that $s_j \geq \tau \geq \frac{1}{\eps^{1/2}}$. Therefore, the function values of $g_j \oplus f_{I^j_M}$ are contained in $[1, O(\frac{s_j}{\eps})]$. Since the permissible additive error is $\widetilde{O}(\frac{1}{\alpha\eps})$, it suffices to approximate $g_j \oplus f_{I^j_M}$ with factor $1 + O(\frac{1}{\alpha s_j})$.

    By Lemma~\ref{lem:median-property}, the items in $I^j_M$ have at most $3\tau$ distinct profits. In $O(s_j)$ time, we can partition $I^j_M$ into (at most) $3\tau$ subsets $S_1, \ldots, S_{3\tau}$ by profit. Since the items in each subset $S_j$ have the same profit, the corresponding function $f_{S_j}$ can computed in linear time, and moreover, it is pseudo-concave and $p$-uniform for some $p$. Thus we can use Lemma~\ref{lem:approx-m-conv}(ii) to compute a function of complexity $\widetilde{O}(\alpha s_j)$ that approximates 
    \[
        f_j \oplus f_{S_1} \oplus \cdots \oplus f_{S_{3\tau}}
    \]
    with factor $1 + O(\frac{1}{\alpha s_j})$. The time cost is $\widetilde{O}(\frac{\alpha}{\eps^{1/2}} + s_j + \alpha s_j\tau^2) = \widetilde{O}(\alpha s_j\tau^2)$. Summing the complexity of the resulting functions and the time cost over all $j$'s, we have the total complexity of the resulting functions is 
    \[
        \sum_{j=1}^\ell \widetilde{O}(\alpha s_j) \leq \widetilde{O}(\alpha \sum_{j=1}^\ell s_j) \leq \widetilde{O}(\alpha \cdot \frac{1}{\alpha\eps}) \leq  \widetilde{O}(\frac{1}{\eps}),
    \]
    where the second inequality is due to Lemma~\ref{lem:median-property}, the and the total time cost is
    \[
        \sum_{j=1}^\ell \widetilde{O}(\alpha s_j\tau^2) \leq  \widetilde{O}(\alpha \tau^2 \sum_{j=1}^\ell s_j) \leq \widetilde{O}(\frac{\tau^2}{\eps}) \leq \widetilde{O}(\frac{1}{\eps^2}).
    \]
\end{proof}

The following lemma is a consequence of Corollary~\ref{coro:comb-int} and Lemma~\ref{lem:approx-lmh}.
\begin{lemma}\label{lem:head}
    In $\widetilde{O}(\frac{1}{\eps^2})$ total time, we can compute a function of complexity $\widetilde{O}(\frac{1}{\eps^2})$ that approximates $f_{I_{\mathrm{head}}}$
    with additive error $O(\frac{1}{\alpha\eps})$
\end{lemma}

\subsection{Putting Things Together}
\begin{lemma}\label{lem:solve-rp}
    The reduce problem $\mathrm{RP}(\eps, \alpha)$ can be solved in $\widetilde{O}(n + \frac{1}{\eps^2})$ time.
\end{lemma}
\begin{proof}
    By Lemma~\ref{lem:head}, we can compute a function of complexity $\tilde{O}(\frac{1}{\eps})$ that approximates $f_{I_{\mathrm{head}}}$ with additive error $\widetilde{O}(\frac{1}{\alpha\eps})$.

    By Lemma~\ref{lem:approx-tail}, in $\widetilde{O}(n + \frac{1}{\eps^2})$ time, we can compute a function of complexity $\widetilde{O}(\frac{1}{\eps})$ 
    \[
        f_{I_0} \oplus \mathrm{leq}(f_{I_1}, \frac{1}{\eps^{3/2}}) \oplus \cdots \oplus \mathrm{leq}(f_{I_{r}}, \frac{1}{2^{{r}-1}\eps^{3/2}})
    \] 
    with additive error $\widetilde{O}(\frac{1}{\alpha\eps})$.

    Then we merge the two resulting functions by Lemma~\ref{lem:approx-m-conv}(i) (with factor $1 + O(\eps))$. This incurs an additional additive error $O(\eps f_I(t)) = O(\frac{1}{\alpha\eps})$. Therefore, we can obtain a function of complexity $\widetilde{O}(\frac{1}{\eps})$ that, by Corollary~\ref{coro:tail-bound}, approximates $f_I$ with additive error $\widetilde{O}(\frac{1}{\alpha\eps})$.
\end{proof}

Theorem~\ref{thm:main} follows by Lemma~\ref{lem:reduction} and Lemma~\ref{lem:solve-rp}.

\section{Conclusion}\label{sec:conclusion}
In this paper, we presented an FPTAS with running time $\widetilde{O}(n + (1/\eps)^2)$ for Knapsack, which is essentially tight given the conditional lower bound of $(n+1/\eps)^{2-o(1)}$. It remains an important open problem whether there exists an FPTAS of running time $\widetilde{O}(n/\eps)$ for Knapsack. Another open problem is that if we are allowed to relax the knapsack constraint by $O(\eps)$ fraction, can we break the quadratic barrier? This concept is also known as weak approximation schemes, and has been studied extensively for Subset 
Sum~\cite{MWW19,BN21b,CLMZ24cSTOCPartition} and Unbounded Knapsack recently~\cite{KPS17,JK18,CMWW19,MWW19,BC22}.

\section*{Acknowledgements} We are grateful to the anonymous reviewers for their valuable comments, which helped us significantly simplify the presentation of our algorithm.

\bibliographystyle{alphaurl}
\bibliography{newref}
\end{document}